\documentclass[sigconf,authorversion,nonacm,balance=true]{acmart}

\usepackage[switch]{lineno} 
\AtBeginDocument{%
  }

\usepackage{todonotes}

\usepackage{graphicx} % Required for inserting images
\usepackage{etoolbox}
\usepackage{algorithm}
\usepackage{algpseudocode}
\usepackage{amsmath}
\usepackage{mathtools}
\usepackage{amsthm}
\usepackage{placeins}
\usepackage{color}
\usepackage{parskip}
\usepackage{enumitem}
\usepackage{tikz}
\usepackage{tikz-uml}
\usetikzlibrary{arrows,shadows,calc}
\usepackage{pgf-umlsd}
\usepackage{pst-node}
\usepackage{tikz-cd}
\usepackage[noabbrev,capitalize]{cleveref}

\floatname{algorithm}{Experiment} % Change name from Algorithm 
\Crefname{algorithm}{Experiment}{Experiments}

\usepackage{float} % Required to define custom floating environments
\newfloat{oracle}{tbp}{loe}
\floatname{oracle}{Oracle}
\crefformat{oracle}{Oracle~#2#1#3}
\Crefname{oracle}{Oracle}{Oracles}

\usepackage{eso-pic}

\AddToShipoutPictureFG*{
  \AtPageUpperLeft{
    \hspace*{1cm}
    \raisebox{-1cm}{
      \parbox{0.5\textwidth}{
        \noindent\small\textsf{
          This paper was published in the Proceedings of the ACM Asia Conference on Computer and Communications Security (ASIA CCS '26), June 1--5, 2026, Bangalore, India.
          \url{https://doi.org/10.1145/3779208.3785289}
        }
      }
    }
  }
}

\newfloat{construction}{tbp}{loe}
\floatname{construction}{Construction}
\crefformat{construction}{Construction~#2#1#3}
\Crefname{construction}{Construction}{Constructions}

\newcommand\NoThen{\renewcommand\algorithmicthen{}}
\newtheorem{theorem}{Theorem}[section]
\theoremstyle{definition}
\newtheorem{definition}{Definition}[section]

\newcommand{\olive}[1]{#1}
\newcommand{\Oracle}{{\mathcal{O}}}
\newcommand{\adv}{{\mathcal{A}}}
\newcommand{\mpk}{{mpk}} %master public key
\newcommand{\msk}{{msk}} %master private key
\newcommand{\upk}{{upk}} %user public key
\newcommand{\usk}{{usk}} %user secret key
\newcommand{\sppk}{{sppk}} %SP public key
\newcommand{\spsk}{{spsk}} %SP secret key

\newcommand{\COMMENT}[1]{\textcolor{gray}{\textit{// #1}}}

\newcommand{\Sig}{{\Sigma}}
\newcommand{\Enc}{{\Lambda}}
\newcommand{\NIKE}{{\Psi}}
\newcommand{\NIZK}{{\Pi}}

\newcommand{\ZKP}{\mathsf{NIZK}}

\newcommand*\circled[1]{\tikz[baseline=(char.base)]{
            \node[shape=circle,draw,inner sep=1.2pt] (char) {{\footnotesize #1}};}}

\newcommand{\nym}{\mathsf{nym}}
\newcommand{\pk}{\mathsf{pk}}
\newcommand{\sk}{\mathsf{sk}}
\newcommand{\msg}{\mathsf{msg}}

\newcommand{\shk}{\mathsf{shk}}

\newcommand{\getsr}{\gets_{\$}}

\newcommand{\GG}{\mathbb{G}}
\newcommand{\ZZ}{\mathbb{Z}}

\newcommand{\secpar}{\lambda}
\newcommand{\pp}{pp}

\newcommand{\alg}[1]{\mathsf{#1}}

\newcommand{\NymGen}{\alg{NymGen}}

\newcommand{\Groth}{\alg{Groth}}
\newcommand{\ElGamal}{\alg{ElGamal}}
\newcommand{\DHN}{\alg{DH}}
\newcommand{\ParGen}{\alg{Setup}}
\newcommand{\KeyGen}{\alg{KeyGen}}
\newcommand{\Sign}{\alg{Sign}}
\newcommand{\Rand}{\alg{Rand}}
\newcommand{\NymVerify}{\alg{NymVf}}
\newcommand{\Verify}{\alg{Verify}}
\newcommand{\ShareKey}{\alg{ShareKey}}
\newcommand{\Encrypt}{\alg{Enc}}
\newcommand{\Decrypt}{\alg{Dec}}
\newcommand{\Open}{\alg{Open}}
\newcommand{\Prove}{\alg{Prove}}

\newcommand{\bpk}{bPk}
\newcommand{\bpkplus}{bPk\textsuperscript{\#}}

\renewcommand{\longleftarrow}{\gets}

\begin{document}
\title{\bpkplus: Delegatable Pseudonyms}
\subtitle{And Their Applications to National eID Systems}

    \author{Stephan Krenn}
      \affiliation{%
     \institution{AIT Austrian Institute of Technology}
     \city{Vienna}
     \country{Austria}}
    \email{stephan.krenn@ait.ac.at}
    
    \author{Doryan Lesaignoux}
    \affiliation{%
     \institution{AIT Austrian Institute of Technology}
     \city{Vienna}
     \country{Austria}}
    \email{doryan.lesaignoux@ait.ac.at}
    
    \author{Sebastian Ramacher}
    \affiliation{%
     \institution{AIT Austrian Institute of Technology}
     \city{Vienna}
     \country{Austria}}
    \email{sebastian.ramacher@ait.ac.at}

%%
%% By default, the full list of authors will be used in the page
%% headers. Often, this list is too long, and will overlap
%% other information printed in the page headers. This command allows
%% the author to define a more concise list
%% of authors' names for this purpose.

%\renewcommand{\shortauthors}{Krenn et al.}

\begin{abstract}
  Electronic identities (eIDs) are crucial in an increasingly digitalized environment. 
  Pseudonyms, as offered by Austria's governmental sector-specific personal identifiers (bPks), can significantly improve privacy by ensuring that personal data is not universally traceable across public services and private companies. 
  However, the current architecture comes with several challenges regarding availability, privacy, and authenticity, due to a fully centralized design. 

  This paper proposes \bpkplus, a distributed architecture to address these issues, reducing reliance on the central authority, while still providing all functional requirements to the existing bPk system.
  In particular, users are delegated the rights to compute their own pseudonyms, thereby minimizing metadata revealed to the central authority, while (subsets of) service providers may receive the right to compute pseudonyms only within their own domain, thereby reducing the availability needs of the central authority.

  To the best of our knowledge, we provide the first formal framework for such \emph{delegatable pseudonym systems}, together with a generic construction for which we provide formal security proofs.
  Furthermore, we propose a concrete instantiation of our construction, together with a reference implementation demonstrating the practical efficiency.
\end{abstract}

%%
%% Keywords. The author(s) should pick words that accurately describe
%% the work being presented. Separate the keywords with commas.
\keywords{national eID system, sector-specific pseudonyms, provable security}

\maketitle
\nolinenumbers

\section{Introduction}\label{s:intro}
Electronic identification schemes (eIDs) are essential for secure, efficient, and reliable digital interactions in both the public and private sectors. 
They enable individuals to authenticate themselves online, e.g., when accessing public e-government services, or private services like banking or healthcare, without relying on insecure methods like passwords. 
By providing a verified digital identity, eIDs help reduce fraud, identity theft, and cybercrime while ensuring compliance with data protection laws like the GDPR.

In this context, pseudonyms play a crucial role in enhancing privacy and data protection while enabling secure digital interactions. 
Instead of using a fixed personal identifier across all services, pseudonyms allow users to generate sector-specific or transaction-specific identifiers, ensuring that their real identity remains protected in different contexts. 
This approach aligns with the principles of GDPR, particularly data minimization and purpose limitation, by allowing only the necessary identity attributes to be shared on a need-to-know basis, and also helps prevent unwanted tracking and profiling, reducing the risk of mass surveillance or identity theft. 

While many European countries have eID systems, not all of them implement sector-specific pseudonyms. 
Instead, national eID systems often use single identifier (e.g., the social security numbers) without the added complexity of domain-specific pseudonyms.

One of the first countries to introduce pseudonyms was Austria, where so-called  ``bereichsbezogene Personenkennzeichen'' (bPk; sector-specific personal identifiers)\footnote{\url{https://www.bundeskanzleramt.gv.at/agenda/digitalisierung/stammzahlenregisterbehoerde/bereichsspezifische-personenkennzeichen/beschreibung.html}} to enhance data privacy, security, and interoperability in digital identity management were already introduced in 2004 as part of the national e-Government Act.\footnote{\url{https://ris.bka.gv.at/GeltendeFassung.wxe?Abfrage=Bundesnormen&Gesetzesnummer=20003230}} 
These identifiers ensure that personal data is not universally traceable across different administrative or private sectors, reducing the risk of unauthorized profiling and misuse. 
By deriving unique identifiers per sector, ID Austria ensures that individuals can interact with government services, healthcare providers, and financial institutions without exposing a single, universal personal identifier.

In the following we briefly recap the functioning of the information flow on a logical level, cf. also \cite{DBLP:conf/birthday/PoschPTZZ11} and \cref{fig:bpks_old}. 
For more details on these flows, we also refer to \cref{sec:sequence-diagrams}.
The actual authentication system starts with a registration phase. All Austrian citizens and new residents are automatically created by the central 
authority; \olive{this registration and participation are mandatory by law from first official registration (e.g., after birth or immigration) in Austria onward.}
When a user wishes to pseudonymously authenticate to a public service $SP$ \circled{1}, the user leverages the national eID system \circled{2} to authenticate to a dedicated central public authority (“Stammzahlregisterbehörder”) \circled{3} upon request of the service provider. This authority derives the bPk from a locally stored and protected secret key created during the registration and sends it to the service provider \circled{4}. Furthermore, also in the absence of users, public authorities and governmental services may request the central authority to derive a bPk for a specific user, thereby relieving them from locally storing identity data in the plain to adhere to privacy-by-design principles.
Additionally, the central authority is the only entity in the system which is able to link bPks across different domains, in case that multiple public authorities are involved in a transaction (e.g., municipality services and tax authorities), e.g., $SP'$ can obtain an encrypted version of the bPk for $SP''$ to enable $SP''$ to assign the bPk to the current file \circled{5}.
In case of abuse of pseudonymity, e.g., by filing false statements, or if necessary for other reasons (e.g., decease of a user), the central authority is also able to de-anonymize bPks.
Finally, starting with $2018$, bPks cannot just be used in communications with public authorities, but also with private companies. 
Overall, this caused more than $2.5$ billion queries to the central authority annually~\cite{szrb}.

\begin{figure}[bt!]
  \includegraphics[width=0.76\columnwidth]{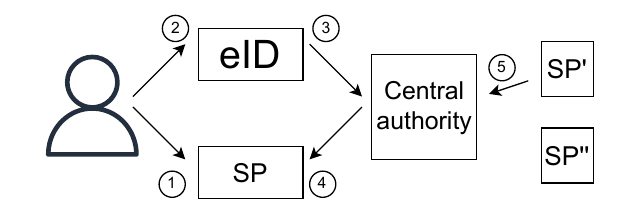}
  \caption{Logical information flow in the current bPk architecture}
  \label{fig:bpks_old}
\end{figure}

Cryptographically, the computation of the bPk follows an encrypt-then-hash approach.
That is, the central authority first encrypts the user's resident registration number $uid$ using a symmetric encryption scheme $\mathsf{SymEnc}$ under a secret key $sk$, and then hashes the result together with a public identifier of $SP$, obtaining the bPk.
That is, the bPk has the form $\mathsf{Hash}(\mathsf{SymEnc}(sk,uid),SP)$~\cite{id-austria-whitepaper}. 
Specifically, this approach is instantiated using \textsf{3DES} as symmetric encryption scheme, and \textsf{SHA-1} as hash function\footnote{\url{https://www.bundeskanzleramt.gv.at/agenda/digitalisierung/stammzahlenregisterbehoerde/veroeffentlichungen.html}}, while the supported domains are defined in a dedicated regulation\footnote{\url{https://www.ris.bka.gv.at/GeltendeFassung.wxe?Abfrage=Bundesnormen&Gesetzesnummer=20003476}}.

Looking at the existing solution, there are multiple questions to be addressed:

\begin{description}
    \item[Availability.]
      The current architecture of the Austrian pseudonym system relies on a fully centralized architecture.
      Even if protected with state-of-the-art techniques, a denial of service attack, or also a non-malicious outage of the central authority cannot be excluded, resulting in immediate impacts on public authorities which no longer can obtain bPks.
      Furthermore, with the increasing use in the private sector, this impact also increasingly effects private service providers such as banks or private healthcare providers.
    \item[Privacy.]
      In the current architecture, the central authority is a fully trusted entity, without the user having the freedom to choose between different providers.
      While this is somewhat inherent due to its defined tasks and duties, it is also involved in any authentication session involving bPks, i.e., it learns who is authenticating to which service when and how often.
      While the authority acts within a tight legal corset to guarantee that the collected data is not misused, a malicious adversary obtaining (potentially read-only) access to this data can infer detailed identity profiles of millions of citizens. 
    \item[Authenticity.]
      Overcoming the aforementioned challenges by simply distributing, e.g., $\mathsf{SymcEnc}(sk,uid)$ to the individual users does not work in a straightforward manner, \olive{as was indeed done by the Austrian ``Bürgerkarte'' (citizen's card)~\cite{theuermann19}}.
      However, while $sk$ remains protected at the central authority in this case and citizens may derive arbitrary bPks from this value, they could not prove to a service provider that the bPk is well-formed, i.e., a service provider could not distinguish a valid bPk for a given user from, e.g., a random bit string.
      \olive{This in turn allows users to authenticate under arbitrary bPks anytime they authenticate, which is especially critical for authentications to private companies, which do not have a legal basis to compare the correctness of the computation with the central authority.}
      Thus, such authenticity guarantees are essential due to the potential legal implications bound to user authentication.
    \item[Provability.]
      Finally, while the specific way how bPks are currently computed intuitively seems to achieve all expected security guarantees, at least when modeling the hash function as a random oracle, it does not provide formal guarantees.
      Specifically, neither a rigorous definition of the required properties nor formal proofs have been published, in contradiction to the provable security paradigm for highly sensitive applications. 
\end{description}

\paragraph{Our Contribution.}
The goal of this work is thus to suggest \bpkplus, a revised architecture and design, overcoming the above limitations without hindering the applicability of the proposed solution.
It is worth noting that while our approach is inspired by a specific real-world system, the solution does not contain specificities, e.g., of the related eID system, such that it could be used also in the context of other countries.

More precisely, our contributions can be summarized as follows:
\begin{itemize}
    \item 
      In a first step, we derive the functional and non-functional requirements from the current design of the Austrian bPk system, including legal constraints from the Austrian e-Government law (e.g., related to mandatory key generation at the time of birth).
      We summarize the different capabilities that have to be offered by such a system, and for the first time formally define the non-functional, i.e., security- and privacy-related, requirements. 
    \item 
      We then provide a provably secure generic construction satisfying the security definitions.
      In contrast to the existing solution, our solution is fully distributed in the sense that the central authority is only involved in unavoidable cases, e.g., when pseudonyms need to be linked across different domains, or when a pseudonym has to be deanonymized.
      Nevertheless, despite being computed by the user, the service provider has formal authenticity guarantees that a real user has provided a well-formed pseudonym.
    \item 
      In order to further reduce the dependability on the central authority, we further offer the possibility that also pre-defined service providers can compute pseudonyms for arbitrary users -- yet only for their own scope.
      While this solves the impact, e.g., of an attack on the central authority, this might not be desirable for certain (e.g., private) service providers.
      Our construction thus naturally allows for a distinction between service providers who may, and others who may not, compute pseudonyms.
    \item 
      Finally, we provide a concrete instantiation of our construction from standard building blocks such as digital signatures, zero-knowledge proofs, and non-interactive key exchange protocols. Going a step further, we instantiate our system with pairing-based building blocks and analyze its efficiency. Notably, on midrange hardware from 2022 the generation of a pseudonym including the proof of well-formedness takes less than $5$ ms on average whereas verification of the proofs takes less than $10$ ms.
\end{itemize}

\paragraph{Related Work.} 
\olive{There is a large body of work on privacy-preservation in eID systems. In the following we give a concise overview over some of the most closely related approaches found in the literature.}

The evolution of the Austrian eID system including its integration into eGovernment services has seen discussion in the academic literature throughout the last two decades, e.g., \cite{DBLP:journals/clsr/Rossler08,DBLP:conf/egov/TauberR09,DBLP:conf/birthday/PoschPTZZ11,DBLP:conf/sec/ZwattendorferS13,DBLP:journals/compsec/ZwattendorferS15,DBLP:journals/istr/ZwattendorferS16}. To the best of our knowledge, the specific requirements, design and the construction of bPks has not seen a formal treatment with an analysis in the framework for provable security.

For the German eID system~\cite{dagdelen2013cryptographic}, Dagdelen proposed a domain-specific pseudonymous signature scheme. In the German system, however, users are equipped with identity cards that store their secret key material. The generation of pseudonyms is done by running a non-interactive Diffie-Hellmann key exchange with a service provider public-key. In contrast, no proof is provided and thereby is affected by the same issues as the existing Austrian approach.

More generally, cryptographic pseudonym systems were introduced by Chaum~\cite{DBLP:journals/cacm/Chaum85}, and later formalized and deployed in a series of work, including scope-exclusive pseudonyms, e.g., \cite{SAC:LRSW99,SAC:CKLMNP15,CCS:CamVan02,PaquinZaverucha2013UPCS}. However, these concepts are focusing on user-centricity without the possibility for fine-grained delegation of the pseudonym computation to other entities.

\olive{Notably, Bringer et al.~\cite{DBLP:conf/fc/BringerCLP14} use domain-specific pseudonyms in an eID context.
In another series of work, Camenisch et al. \cite{camenisch2015linkable,camenisch2017privacy}  introduced domain-specific pseudonym systems which allow a central authority to translate pseudonyms from one domain to another providing very high anonymity and transparency.
While all being suitable candidates upon first glance, these schemes use models where keys are managed in a fully decentralized manner, thus being incompatible with the requirements of the Austrian legal framework.}

\olive{Deswarte et al.~\cite{deswarte2010proposal}  proposed a privacy-preserving national identity card system. However, their system was designed to prove attributes of a user in an unlinkable manner, and explicitly does not support pseudonyms. Furthermore, no rigorous security framework nor formal proofs are provided.}

Finally, in general Self-Sovereign Identity (SSI) systems such as \cite{DBLP:conf/trustcom/AbrahamKMRS21,SP:MMZJFK21}, users are completely put in control of their secret keys and thus all actions in the system involve active participating of users. As discussed above, bPks need to be computable in specific cases due to the legal requirements of the Austrian eID system and thus SSI systems lack required functionality.

\olive{In summary, while a large body of relevant research exists, to the best of our knowledge none of these works covers all the details identified in our analysis of the Austrian bPk system, which inherently requires a central authority, yet also requires the need to delegate pseudonym computation to both users and service providers in a provably secure manner, to increase privacy, overcome efficiency challenges, and maximize availability and resilience.}

\paragraph{Outline.}
  This document is structured as follows.
  \Cref{s:preliminaries} introduces the notation, and summarizes the necessary cryptographic background.
  In \cref{s:framework} we then introduce the modeling, including syntax and interfaces, but also functional and security requirements.
  We then present a generic construction from existing cryptographic building blocks, together with a rigorous security proof, in \cref{s:generic}, before presenting a concrete instantiation and implementation in \cref{s:instantiation}.
  In \cref{s:discussions}, we discuss the integration of revocation and key rotation in our construction.
  We finally conclude in \cref{s:conclusion}.

\section{Preliminaries}\label{s:preliminaries}
Readers familiar with the high-level concepts of public key encryption, digital signatures, non-interactive key exchange, and zero-knowledge proofs can safely skip this section upon first reading, and later come back for formal details, e.g., related to security proofs.

\subsection{Notations}

In this paper, we denote the security parameter as $\lambda \in \mathbb{N}$ and $1^\lambda$ as its unary representation. 

An algorithm $\adv$ is called \textit{Probabilistic Polynomial Time} (PPT) if its running time is bounded by a polynomial $P(|x|)$ for every input $x$. 
Unless stated otherwise, all algorithms and adversaries considered in this paper are PPT.

%The advantage of an adversary $\adv$ is denoted as $Adv_\adv(\lambda)$.

Let $(\Omega,\mathcal{E}, \mathbb{P})$ be a probability space. 
We write $\mathbb{P}[\Omega : \omega ]$ the probability of an event $\omega \in \mathcal{E}$ in space $\Omega$. A function $\varepsilon : \mathbb{N} \rightarrow \mathbb{R}^+$ is said negligible if it vanishes faster than every reverse polynomial. 
More formally, if $\forall k\in \mathbb{N}, \exists n_0 \in \mathbb{N} : \forall n > n_0, \varepsilon(n) \leq n^{-k}$.

\olive{For all cryptographic primitives used in this paper, we assume that the $\pk$ can always be implicitly derived from the corresponding $\sk$, i.e., knowing $\sk$ enables at least implicit access to $\pk$.}

\subsection{Cryptographic Building Blocks}
In the following recap the basic definitions required for the building blocks used in our generic construction.
For all the building blocks, we omit formal definitions of correctness, i.e., all schemes are supposed to function properly if all parties follow the protocol specifications.
\subsubsection{Public Key Encryption}
A public key encryption scheme is a cryptographic system that uses a publicly known key for encryption and a corresponding private key for decryption, enabling secure communication without prior key exchange.
\begin{definition}[\textbf{Public-Key Encryption scheme}]
    A public-key encryption scheme $\Enc$ is a tuple of four PPT algorithms $(\ParGen, \KeyGen, \allowbreak \Encrypt, \Decrypt)$ such that:
    \begin{itemize}
        \item $\ParGen(1^\lambda) \rightarrow \pp$: This algorithm takes a security parameter $\lambda$ and outputs public parameters $\pp$.
        \item $\KeyGen(\pp) \rightarrow (\sk,\pk)$: This algorithm takes public parameters $\pp$ and outputs a secret and public key $(\sk,\pk)$.
        \item $\Encrypt(\pk,m) \rightarrow c$: This algorithm takes a public key $\pk$ and a message $m$ and outputs a ciphertext $c$.
        \item $\Decrypt(\sk,c) \rightarrow m^*$: This algorithm takes a secret key $\sk$ and a ciphertext $c$ and outputs a message $m^*$.
    \end{itemize}
\end{definition}

The most fundamental security requirement for public key encryption requires that, only knowing the public key, it is computationally infeasible to decide which of two (adversarially chosen) plaintexts is encrypted in a ciphertext.
For an in-depth discussion, we refer, e.g., to Katz and Lindell~\cite{DBLP:books/crc/KatzLindell2014}.

\begin{definition}[\textbf{IND-CPA}]
    Let $\Enc$ be a public-key encryption scheme. We define the IND-CPA  experiment of  $\Enc$ by $\mathrm{Exp}_{\Enc,\adv}^{\mathrm{IND-CPA}}(\lambda)$ in \cref{ind-cpa}.     
    We say that $\Enc$ is IND-CPA secure if there is a negligible function such that: 
    \[\mathrm{Adv}_{\Enc,\adv}^{\mathrm{IND-CPA}}(\lambda) \overset{\mathrm{def}}{=} \bigg|\:\mathbb{P}\left[ \mathrm{Exp}_{\Enc,\adv}^{\mathrm{IND-CPA}}(\lambda) = 1 \right] - \frac{1}{2} \bigg| \leq \varepsilon(\lambda)\] 

%    \bigg|\:\mathbb{P}\left[
%    \mathrm{Exp}_{\Enc,\adv}^{\mathrm{IND-CPA}}(\lambda) = 1 \right] - \frac{1}{2}\bigg|

\begin{algorithm}[ht!]
\caption{$-$ $\mathrm{\mathbf{Exp}}_{\Enc,\adv}^{\mathrm{IND-CPA}}(\lambda)$}\label{ind-cpa}
\begin{algorithmic}[1]
\State $\pp \longleftarrow \ParGen(1^\lambda)$
\State $(\sk,\pk) \longleftarrow \KeyGen(\pp)$
\State $m_0,m_1 \longleftarrow 
\adv(\pp,\pk) \hspace{32mm} m_0,m_1 \in \mathcal{M}$ 
\State $b \overset{\$}{\longleftarrow}\{0,1\}$
\State $c^* = \Encrypt(\pk,m_b)$
\State $b^* \longleftarrow \adv(\pp,\pk,c^*)$
\If{$b^* = b $}
    \Return $1$
\EndIf
\State \Return $0$
\end{algorithmic}
\end{algorithm}

\begin{comment}

\begin{figure}[ht!]
    \centering
    \fbox{
\begin{minipage}{\columnwidth}
$\mathrm{\mathbf{Exp}}_{\Enc,\adv}^{\mathrm{IND-CPA}}(\lambda)$:
\begin{description}
  \item[\hspace{10mm}] $\pp \longleftarrow \ParGen(1^\lambda)$
  \item[\hspace{10mm}] $(\sk,\pk) \longleftarrow \KeyGen(\pp)$
  \item[\hspace{10mm}] $m_0,m_1 \longleftarrow 
\adv(\pp,\pk) \qquad \qquad ,m_0,m_1 \in \mathcal{M}$
  \item[\hspace{10mm}] $b \overset{\$}{\longleftarrow}\{0,1\}$
  \item[\hspace{10mm}] $c^* = \Encrypt(m_b,\pk)$
  \item[\hspace{10mm}] $b^* \longleftarrow \adv(\pp,\pk,c^*)$
  \item[\hspace{10mm}] If $b^* = b :$ Return 1
  \item[\hspace{10mm}] Return 0

\end{description}
\end{minipage}
}
    \caption{IND-CPA game}
    \label{ind-cpa}
\end{figure}
\end{comment}

\end{definition}

\subsubsection{Digital signature}
A digital signature ensures the authenticity, integrity, and non-repudiation of a message or document using a private signing key and a publicly verifiable signature.
\begin{definition}[\textbf{Signature scheme}]
    Under a security parameter $\lambda$, a signature scheme $\Sig$ is a tuple of four PPT algorithms $\ParGen,\KeyGen, \Sign, \Verify\}$ such that:
    \begin{itemize}
        \item $\ParGen(1^\lambda) \rightarrow \pp$: This algorithm takes a security parameter $\lambda$ and outputs public parameters $\pp$. 
        \item $\KeyGen(\pp) \rightarrow (\sk,\pk)$: This algorithm takes public parameters $\pp$ and outputs a secret and public key $(\sk,\pk)$.
        \item $\Sign(m,\sk) \rightarrow \sigma$: This algorithm takes a message $m$ and a secret key $\sk$ and outputs a signature $\sigma$.
        \item $\Verify(\pk,\sigma,m) \rightarrow 0/1$: This algorithm takes a public key $\pk$, a signature $\sigma$ and a message $m$ and outputs 1 or 0 whether the signature is valid or not.
    \end{itemize}   
\end{definition}

There exist several unforgeability notions. 
For our construction, we require only a basic notion, which states that no adversary not knowing the secret key can create a valid signature on an arbitrary new message, even after having seen arbitrarily many signatures on messages of its choice.
For an in-depth discussion, we refer, e.g., to Katz and Lindell~\cite{DBLP:books/crc/KatzLindell2014}.
\begin{definition}[\textbf{EUF-CMA}]
    Let $\Sig$ be a signature scheme. $\Sig$ achieves existential unforgeability under chosen-message attacks (EUF-CMA) if: 
    \[\mathbb{P}\left[\begin{array}{c}
    \pp\leftarrow\ParGen(1^\secpar)\\
    (\sk,\pk) \leftarrow \KeyGen(\cdot) \\
    (m^*,\sigma^*) \leftarrow \adv^{\Oracle^{\Sign}}(\pk)
    \end{array} :  
    \begin{array}{c} 
     \Verify(\pk,\sigma^*,m^*)) = 1 \\
     m^* \notin \mathcal{Q}^{\Sign}
    \end{array}
    \right] \leq \varepsilon(\lambda)\]
\end{definition}

\subsubsection{Non-Interactive Zero-Knowledge Proof of Knowledge Systems}
A zero-knowledge proof of knowledge is a cryptographic proof that allows a prover to convince a verifier that a statement is true without revealing any additional information beyond what is revealed by the message itself. 
If it only consists of a single message being sent from the prover to the verifier, it is called non-interactive (NIZK).

Formally, let $\mathcal{L}$ be a NP-language associated with a relation $R$ such that $\mathcal{L}_R = \{x\ | \exists w : R(x,w) = 1\}$

\begin{definition}[\textbf{Non Interactive proof}]
    A non-interactive proof system $\NIZK$ is a tuple of three PPT algorithms $\{\ParGen, \Prove, \Verify\}$ such that:
    \begin{itemize}
        \item $\ParGen(1^\lambda) \rightarrow crs$: This algorithm takes a security parameter $\lambda$ and outputs a common reference string $crs$.
        \item $\Prove(crs,x,w) \rightarrow \pi$: This algorithm takes a common reference string $crs$, a statement $x$ and a witness $w$ as input and outputs a proof $\pi$.
        \item $\Verify(crs,x,\pi) \rightarrow 0/1$: This algorithm takes a common reference string $crs$, a statement $x$ and a proof $\pi$ as input and outputs 1 or 0 whether the proof is valid or not.
    \end{itemize}
\end{definition}

Firstly, it must not be possible to generate proofs for wrong statements.
\begin{definition}[\textbf{Soundness}]
    $\NIZK$ is sound if for any PPT adversary $\adv$, there is a negligible function $\varepsilon$ such that: 
    \[\mathbb{P}\left[\begin{array}{c}
    \pp \leftarrow \ParGen(1^\lambda) \\
    (x^*,\pi^*) \leftarrow \adv(\pp)
    \end{array} \; :\;  
    \begin{array}{c} 
     \Verify(crs,x^*,\pi^*) = 1 \\
     \land\ \  x^* \notin \mathcal{L}_R
    \end{array}
    \right] \leq \varepsilon(\lambda)\]   

\end{definition}

Secondly, the verifier must not learn anything beyond the statement itself.
\begin{definition}[\textbf{Adaptive zero-Knowledge}]
    A non-interactive proof system $\NIZK$ is said to be adaptive zero-knowledge if there exists a PPT simulator $\mathrm{SIM} = \{\mathcal{S}_1,\mathcal{S}_2\}$ such that for every PPT adversary $\adv$, there exists a negligible function $\varepsilon$ such that:
    \[ Adv^{ZK}_{\NIZK,\adv,\mathcal{S}}(\lambda) \overset{\mathrm{def}}{=}  \bigg|\ \mathbb{P}\left[crs \leftarrow \ParGen(1^\lambda) :\adv^{\mathcal{P}(crs,\cdot,\cdot)}(crs) = 1 \right] - \] \[\mathbb{P}\left[(crs,\tau) \leftarrow \mathcal{S}_1(1^\lambda) :\adv^{\mathcal{S}(crs,\tau,\cdot,\cdot)}(crs) = 1 \right]\bigg| \leq \varepsilon(\lambda)\]
    where $\tau$ is a simulation trapdoor, $\mathcal{P}$ and $\mathcal{S}$ are two oracles that return $\bot$ if $R(x,w) \neq 1$ or, respectively, $\pi \leftarrow \Prove(crs,x,w)$ and $\pi \leftarrow \mathcal{S}_2(crs,\tau,x)$ otherwise.
     
    \begin{comment}

\end{comment}
\end{definition}

Furthermore, no adversary not knowing a valid witness should be able to convince the verifier with more than negligible probability.
This is modeled by the existence of an extractor which can extract a valid witness from any adversary that is able to generate valid proofs.
This should hold even if the adversary has previously seen simulated proofs on statements of its own choice.
\begin{definition}[\textbf{Weak Simulation-Sound Extractability}]
An adaptively zero-knowledge proof system $\NIZK$ achieves weak simulation sound extractability if there exists a PPT extractor $(\mathcal{S},\mathcal{E})$ such that for every PPT adversary $\adv$ we have:
\[Adv_{\NIZK,\adv}^{SIM-Sound}(\lambda) \overset{\mathrm{def}}{=}  \bigg|\ \mathbb{P}\left[(crs,\tau) \leftarrow \mathcal{S}_1(1^\lambda) :\adv(crs) = 1 \right]\]
\[ - \mathbb{P}\left[(crs,\tau,\zeta) \leftarrow \mathcal{S}(1^\lambda) :\adv(crs) = 1 \right]\bigg| \leq \varepsilon_1(\lambda)\]
    and,
    \[Adv_{\NIZK,\adv}^{Weak-Ext}(\lambda) \overset{\mathrm{def}}{=}  \mathbb{P}\left[\mathrm{Exp}^{Weak-Ext}_{\NIZK,\adv}(\lambda) = 1 \right] \leq \varepsilon_2(\lambda)\]
    where $\mathrm{Exp}^{Weak-Ext}_{\NIZK,\adv}$ is defined in \cref{wext}.    
\end{definition}

\begin{algorithm}[ht!]
\caption{$-$ $\mathrm{\mathbf{Exp}}^{Weak-Ext}_{\NIZK,\adv}(\lambda)$}\label{wext}
\begin{algorithmic}[1]
\State $(crs,\tau,\zeta) \longleftarrow \mathcal{S}(1^\lambda)$
\State $(x^*,\pi^*) \longleftarrow \adv^{\mathcal{S}(crs,\tau,\cdot)}(crs)$
\State $w \longleftarrow \mathcal{E}(crs,\tau,x^*,\pi^*)$
\If{$\Verify(crs,x^*,\pi^*) = 1 \ \land \ R(x^*,w)\neq 1 \land (x^*,\cdot)\notin \mathcal{Q_S}$}
    \Return $1$
\EndIf
\State \Return $0$\par \noindent
\hspace{3mm}
 Where $\mathcal{S}(crs,\tau,x) := \mathcal{S}_2(crs,\tau,x)$ and $\mathcal{Q_S}$ keeps track of queries and answers of $\mathcal{S}$.
\end{algorithmic}
\end{algorithm}

For further discussions, we refer, e.g., to Goldwasser et al.~\cite{STOC:GolMicRac85} and Derler and Slamanig~\cite{DCC:DerSla19}.

We adopt the Camenisch-Stadler framework~\cite{C:CamSta97} for representing proof goals. 
We write:
$$
  \pi\gets\ZKP\left[(\alpha,\beta,\gamma):Y=G^\alpha\cdot H^\beta ~\land~ Z=G^\alpha\cdot H^\gamma ~\land~ \gamma=\alpha\cdot\beta\right]
$$
to denote a NIZK demonstrating knowledge of the values $\alpha, \beta, \gamma$ that satisfy the relation on the right-hand side, where all values outside the parentheses are assumed to be public.

\subsubsection{Non Interactive Key Exchange}
A key exchange protocol allows two or more parties to securely establish a shared secret key over an insecure channel. 
A non-interactive key exchange (NIKE) protocol enables parties to compute a shared key without direct communication, typically using pre-exchanged public information. The first example of a NIKE scheme can be found in the seminal work Diffie and Hellman
\cite{DifHel76}. Our definition follows the work of Freire et al. \cite{PKC:FHKP13}, where we assume that the identities can implicitly be derived from the keys.

\begin{definition}[\textbf{Non Interactive Key Exchange protocol}]
    A non-interactive key exchange protocol $\NIKE$ is a tuple of PPT algorithms $\{\ParGen,\KeyGen, \ShareKey\}$ such that:
    \begin{itemize}
        \item $\ParGen(1^\lambda) \rightarrow \pp$: This algorithm takes a security parameter $\lambda$ and outputs public parameters $\pp$.
        \item $\KeyGen(\pp) \rightarrow (\sk,\pk)$: This algorithm takes public parameters $\pp$ and outputs a secret and public key $(\sk,\pk)$.
        \item $\ShareKey(\pk',\sk) \rightarrow k$: This algorithm takes a secret key $\sk$ and a public key $\pk'$ and outputs a shared key $k$.
    \end{itemize}
\end{definition}

An important feature of NIKEs is that they are symmetric in the sense that two users, each knowing their own secret key and the other's public key, will obtain the same shared key.
That is, if $(sk_0,pk_0)$ and $(sk_1,pk_1)$ are the keys of two users, it always holds that $\ShareKey(pk_1,sk_0)=\ShareKey(pk_0,sk_1)$.
To satisfy this requirement, $\ShareKey$ is usually designed as deterministic function, which we will also assume in the following. 

We say that a NIKE is secure, if no adversary not knowing the corresponding secret keys can decide which secret key was used to compute a specific shared key, even if the adversary can choose the public key and request arbitrary shared keys before. 

The following definition is targeted to our specific needs, and naturally satisfied by most existing constructions.
In particular, in contrast to previous work~\cite{EC:CasKilSho08,PKC:FHKP13}, we do not require any guarantees against rogue keys, as in our construction all public keys will be honestly generated, and thus only consider honestly generated keys in the experiment.
\begin{definition}[\textbf{Indistinguishability}]
    A NIKE protocol $\NIKE$ is said to be indistinguishable if for any PPT adversary $\adv$, there is a negligible function $\varepsilon$ such that:
        \[Adv^{IND-NIKE}_{\NIKE,\adv}(\lambda) \overset{\mathrm{def}}{=}  \bigg|\:\mathbb{P}\left[ \mathrm{Exp}_{\NIKE,\adv}^{\mathrm{IND-NIKE}}(\lambda) = 1 \right] - \frac{1}{2} \bigg| \leq \varepsilon(\lambda) \]
    where $\mathrm{Exp}_{\NIKE,\adv}^{\mathrm{IND-NIKE}}$ is the experiment described in \cref{ind-nike}.    

\begin{algorithm}[ht!]
\caption{$-$  $\mathrm{\mathbf{Exp}}_{\NIKE,\adv}^{\mathrm{IND-NIKE}}(\lambda)$}\label{ind-nike}
\begin{algorithmic}[1]
\State $\pp \longleftarrow \ParGen(1^\lambda)$
\State $b \overset{\$}{\leftarrow}\{0,1\}$
\State $\pk^*, pk_0,pk_1 \longleftarrow \adv^{\Oracle^{\NIKE},\Oracle^{GenU},\Oracle^{CorruptU}} (\pp)$\newline
Let $\sk_b$ be the secret key corresponding to $\pk_b$ 
\State $k^* \longleftarrow \ShareKey(\pk^*,\sk_b)$
\State $b' \longleftarrow \adv(\pp,k^*)$
\NoThen
\If{:}
    \State (a) $b = b'$ and
    \State (b) $\pk_0,\pk_1$ and $\pk^*$ have not been corrupted.
    \State \Return $1$
\EndIf
\State \Return $0$\par

\noindent Where the oracles behave as follows: 
\begin{description}
    \item[-] $\Oracle^{GenU}$ generates and stores a fresh key pair $(\sk,\pk)$ and outputs $\pk$.
    \item[-]  $\Oracle^{CorruptU}$, on input a previously generated $\pk$, outputs the corresponding $\sk$.
    \item[-] $\Oracle^{\NIKE}$, on input $\pk,\pk'$ that were previously generated, looks up $\sk'$ corresponding to $\pk'$ and returns $\ShareKey(\pk,\sk')$.
\end{description}
\end{algorithmic}
\end{algorithm}
\end{definition}

Finally, we additionally require that no two secret keys can result in the same public key.
Note that this property is naturally satisfied by most constructions over cyclic groups.

For the definition, remember that a function  $f : X \rightarrow Y$ is said to be injective if $f(x) = f(y)$ implies $x = y$ for all $x,y\in X$.
\begin{definition}[\textbf{Secret key to public key injective property}]
    A non-interactive key exchange protocol NIKE $\NIKE$ with secret key space $X$ and public key space $Y$ is secret-key-to-public-key injective if there exists an injective mapping $\mu : X \rightarrow Y$ such that for all $(\sk,\pk) \leftarrow \NIKE.\KeyGen(\pp)$ it holds that $\pk = \mu(\sk)$.
\end{definition}

\section{Framework for Delegatable Pseudonyms}\label{s:framework}
In the following we now define a framework for delegatable pseudo\-nym systems.
We therefore briefly recap the core functionalities on an informal level, based on the Austrian bPk system.
We then introduce the syntax for such systems, and finally provide formal and unambiguous definitions for the security properties of such schemes.

\subsection{Requirements}\label{s:reqs}
A first set of functional requirements immediately follows from the responsibilities of the Austrian \emph{Stammzahlregisterbehörde} as specified in the national eGovernment law: 

\begin{description}
    \item[Central authority pseudonym generation.]
      The central authority needs to be able to compute pseudonyms on behalf of all user from the first registration (e.g., birth or arrival in Austria) onward.
      \olive{It is not foreseen that users may opt out from the system.}
    \item[Pseudonym linking.]
      The central authority needs to be able to ``translate'' pseudonyms for a specific user from one domain to another to allow inter-organization linking when required (subject to clear regulations).
    \item[De-anonymization.]
      In case of abuse of anonymity, the central authority has to be able to re-identify the user's identity, without breaking the user's privacy in other domains.
\end{description}

From a security and privacy point of view, the following requirements are guaranteed by the actual system.
\begin{description}
    \item[Privacy/Anonymity.]
      Even colluding service providers shall not be able to decide by themselves whether two pseudonyms belong to the same user or not.
      That is, they should not be able to pool their information about a specific user, except for cases where the central authority previously translated pseudonyms from one domain to another.
    \item[Authenticity/Non-frameability.]
      Pseudonyms generated must be provably authentic, i.e., a malicious entity should not be able to compute pseudonyms for identities they do not own (or which do not exist at all).
      More precisely, even if a subset of users collude, they should not be able to issue a pseudonym that would be de-anonymized towards an identity not in this set of malicious users.
\end{description}
Finally, it is important to note that current system does not foresee the possibility to revoke a user or rotate keys.

While the current architecture as depicted in \cref{fig:bpks_old} satisfies all these (legally mandated) requirements, a system addressing the challenges mentioned in \cref{s:intro} additionally needs to satisfy requirements related to decentralization and increased metadata privacy: 
%For a better protection of privacy, we require our system to be decentralized, i.e. a user can make requests directly to the service provider with his own pseudonym and without going through the central authority. Hence, our new framework includes the following requirements :
\begin{description}
    \item[User pseudonym generation] In addition to the central authority, honest users should be able to compute their pseudonyms entirely locally themselves, \olive{without prejudice to the verification of authenticity by the service provider.}

    \item[SP pseudonym generation] Certain types of service providers (e.g., public agencies) may be able to also locally generate pseudonyms for arbitrary users, \olive{yet only within their own domain.}
\end{description}

The latter requirement is optional but might be interesting in case of unavailability of the central authority. We are also aware that this requires careful and delicate balancing between privacy and availability, as corrupt service providers may compute arbitrary pseudonyms and share this information with other entities.
We thus envision that service providers with this capability would  be subject to legal regulations (e.g., public agencies), and also that their secret key is secured, e.g., within a hardware security module (HSM) precisely logging each pseudonym computation, such that the behavior can be audited, e.g., by data protection authorities.

\olive{The resulting high-level flows of our architecture are depicted in \cref{fig:bpks_new} and \cref{sec:sequence-diagrams}:
Users can obtain their own key material from the central authority (\circled{1} -- \circled{3}), while trusted service providers optionally obtain their own secret key \circled{4}. 
Using the public key of a SP \circled{5}, users can now locally compute verifiable pseudonyms for SP \circled{6}, while trusted SPs can fetch a user public key \circled{7} to locally compute the pseudonym \circled{8}.
The authorities capabilities (e.g., linking pseudonyms \circled{9} or computing pseudonyms for less tech-savvy users) remain intact. }

\begin{figure}[bt!]
  \includegraphics[width=0.88\columnwidth]{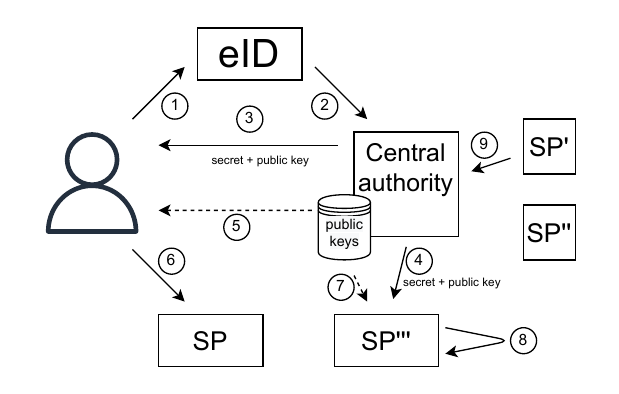}
  \caption{Logical information flow for the proposed \bpkplus architecture}
  \label{fig:bpks_new}
\end{figure}

\subsection{Syntax}\label{s:syntax}
For the initialization of the system, we require a global setup algorithm, defining common parameters, etc.
This algorithm only needs to be executed once for the entire system.
As usual, this algorithm is assumed to be carried out by a trusted party.
To reduce the required trust into this party, in practice one might implement the algorithm as a multi-party protocol\footnote{As demonstrated for example in \cite{AC:KMSV21}.}, with the use of random oracles for constructions in the random oracle model, or similar.
The resulting parameters are assumed to be input to all other algorithms, and will sometimes be omitted for notational convenience.
\begin{description}
    \item[$\ParGen(1^\lambda) \to \pp$] 
 is a probabilistic algorithm that, given security parameter $1^\lambda$, outputs public parameters $\pp$ of underlying primitives used in the scheme.
\end{description}

The following algorithms are executed by the central authority.
The first algorithm is only generated once, to establish a master key pair (corresponding, e.g., to the $3DES$ key in the current bPk system).
Subsequently, the central authority creates secret and public keys for all users and service providers.
Note that in many other privacy-enhancing protocols such as group signatures~\cite{EC:ChavHe91,DBLP:conf/IEEEares/KrennSS19} or attribute-based credentials~\cite{SAC:CKLMNP15}, the key generation happens on the user side.
Centralizing it in our modeling was an active design decision.
Firstly, pseudonyms must remain constant throughout life, making it difficult to make citizens responsible for backups of their secret key.
Secondly, and more importantly, keys must be available from birth, and there is no legal opportunity to opt-out.
Service provider keys are also generated by the authority, as only certain service providers (e.g., public authorities), should have access to the secret key at all, and if, it needs to be included, e.g., in secure hardware, cf. also \cref{s:reqs}:
\begin{description}
    \item[$\KeyGen(\pp) \to (\msk, \mpk)$] is a probabilistic algorithm that, given public parameters $\pp$ as input, outputs the master secret key $\msk$ and the master public key $\mpk$.
    \item[$\KeyGen_{user}(\msk) \to (\usk, \upk)$] is a probabilistic algorithm that, given the master secret key $\msk$, outputs a pair of public and private keys $(\usk,\upk)$ for the user .
    \item[$\KeyGen_{SP}(\msk) \to (\spsk, \sppk)$] is a probabilistic algorithm that, given the master secret key $\msk$, outputs a pair of public and private keys $(\spsk, \sppk)$ for the service provider.
\end{description}

Next, users need to be able to generate pseudonyms.
To achieve authenticity, they additionally generate a proof showing the well-formedness of the pseudonym, i.e., proving that it was derived from a honest user secret key.
This proof can then be verified by the service provider.
If pseudonyms are computed by the service provider (for those that have access to the secret key corresponding to their public key), obviously no such proof is needed.
To keep the model as light-weight as possible, we do not add an additional interface for pseudonym generation through the central authority, as it can simply leverage the service prover's interface as well.
\begin{description}
  \item[$\NymGen_{user}(\usk,\mpk,\sppk) \to (\nym,\pi)$] is an algorithm that, given $\usk,\upk,\sppk$ and  $\mpk$, outputs the user pseudonym $\nym$ and validity proof $\pi$. 
  \item[$\NymVerify(\mpk,\sppk,\nym,\pi) \to 0/1$] is a deterministic algorithm that, given the master public key $\msk$, a service provider public key $\sppk$, a pseudonym $\nym$, a proof $\pi$, outputs $1$ or $0$ whether $\pi$ is valid for $\nym$ or not.
  \item[$\NymGen_{SP}(\spsk,\mpk,\upk) \to \nym$] is a deterministic algorithm that, given the secret key of the service provider $\spsk$ and the public key of the user $\upk$, outputs the user pseudonym $\nym$ specific to a service provider.
\end{description}

Finally, in case of abuse, the central authority has to be able to re-identify the user from a pseudonym and proof.
Also, this allows the central authority to translate a pseudonym from one domain to another, by first opening a given pseudonym, and then computing the new pseudonym for a different domain for the same user.
\begin{description}
  \item[$\Open(\pi, \msk) \to \upk$] is a deterministic algorithm that, given a proof $\pi$ and the master secret key $\msk$, outputs the user public key $\upk$.
\end{description}

We want to remark that this final algorithm might in fact not be necessary for cross-domain linking if the approach currently used in the Austrian bPk system is followed. In that system, when a pseudonym for a given domain is known, the underlying user can only be identified either by brute-forcing all possible users or by consulting a lookup table containing all issued pseudonyms, since the pseudonym generation relies on a one-way function.

\subsection{Security Framework}
From a security point of view, we expect three main properties:
completeness, saying that honest users can always authenticate;
non-frameability, capturing that pseudonyms need to be verifiable and guarantee authenticity;
and anonymity, saying that pseudonyms should not leak any information about a user's identity, including especially also unlinkability.

We define these properties formally in the following.

\subsubsection{Correctness}
We omit a formal definition of correctness.
Intuitively, it ensures that honest users will always be able to successfully present pseudonyms to honest service providers, and that the (trusted) central authority will always be able to link or re-identify honest users.

Furthermore, as illustrated in \cref{fig:correctness}, service providers and users should always obtain the same pseudonym for a given user (i.e., knowing $\usk$ and $\sppk$ should result in the same $\nym$ as starting from $\upk$ and $\spsk$).

\begin{figure}
    \centering
\[\begin{tikzcd}[column sep=2.8cm]
\text{CA}  \arrow{r}{(\usk,\sppk)} \arrow[swap]{d}{(\spsk,\upk)} & \text{U} \arrow{d}{\NymGen_{user}} \\
\text{SP} \arrow{r}{\NymGen_{SP}} & \nym
\end{tikzcd}
\]
    \caption{Correctness of pseudonym generation}
    \label{fig:correctness}
\end{figure}

\subsubsection{Non-Frameability}
  As discussed earlier, authenticity of user-generated pseudonyms is of high importance due to the potential legal implications of formal authentications.
  This property is captured by our notion of non-frameability, which ensures that an honest user cannot be falsely accused of having generated a pseudonym and corresponding proof.
  This should even hold if all other users collude, in which case they still cannot produce a valid pseudonym and proof that falsely implicates an innocent member. 
  In contrast to the non-frameability notion, e.g., in group signatures~\cite{ACNS:BCCGG16,DBLP:conf/IEEEares/KrennSS19}, the central authority must not collude, as our framework does contain secret key material solely held by the user (cf. the discussion in \cref{s:syntax}).
  Additionally, our notion also captures the basic unforgeability notion, which says that colluding users also cannot generate a valid pseudonym and proof which do not link to \emph{any} existing user in the system.

  In the experiment, $\mathrm{N}$ denotes the set of queried pseudonyms. 
Let $KP_u$ be the set of all created pairs of user public and private keys generated by the central authority. We also define the subsets $\mathrm{PK_u}$ (resp. $\mathrm{SK_u}$) of $KP_u$ as the set of honestly generated user public (resp. private) keys. Let $\mathrm{PK_{sp}}$, $\mathrm{SK_{sp}}$, and $\mathrm{KP_{sp}}$ be the respective sets of service provider keys generated by the central authority. Finally, $\mathrm{C_{u}}$ denotes the set of corrupted users.

\begin{definition}[\textbf{Non-Frameability}]
    A delegateable pseudonym system is said to be non-frameable, if for every PPT adversary $\adv$ there exists a negligible function $\varepsilon(.)$ such that: 
    $$
    \mathbb{P}\left[ \mathrm{Exp}_{\adv}^{\mathrm{Non-Frameability}}(\lambda) = 1 \right] \leq \varepsilon(\secpar),
    $$
    where $\mathrm{Exp}_{\adv}^{\mathrm{Non-Frameability}}(\lambda)$ is defined in \cref{exp:non-frameability}.  
\end{definition}

\begin{algorithm}
\caption{$-$ $\mathrm{\mathbf{Exp}}_{\adv}^{\mathrm{Non-Frameability}}(\lambda)$}\label{alg:cap}
\begin{algorithmic}[1]
\State $\mathrm{N}, \mathrm{KP_u}, \mathrm{PK_u}, \mathrm{SK_{u}}, \mathrm{PK_{sp}}, \mathrm{SK_{sp}}, \mathrm{C_u} \leftarrow 	\emptyset$
\State $\pp \longleftarrow \ParGen(1^\lambda)$
\State $(\msk, \mpk) \longleftarrow \KeyGen(\pp)$
\State $(\sppk,\nym^*, \pi^*) \longleftarrow \adv^{\Oracle^{nym}, \Oracle^{GenU}, \Oracle^{GenSP}, \Oracle^{CorrupU}}(\pp,\mpk)$
\NoThen
\If{:} \COMMENT{Adversary outputs a pseudonym of a non-existing user}
    \State (a) $\NymVerify(\nym^*,\pi^*,\mpk, \sppk) = 1$ %// $\pi^*$ is valid 
    \State (b) For $\upk^*=\Open(\pi,\msk), \ \nexists (\upk^*,\usk^*)\in \mathrm{KP_u}$%//$c^*$ opens to a non-existing user
    \State \Return 1
\EndIf
\If{:} \COMMENT{Adversary forges a fresh pseudonym for an honest user}
    \State (a) $\NymVerify(\nym^*,\pi^*,\mpk, \sppk) = 1$ %// $\pi^*$ is valid
    \State (b) For $\upk^*=\Open(\pi,\msk) ,\exists(\upk^*,\usk^*)\in\mathrm{KP_u}$  %//$c^*$ opens to a real user (i.e., honestly generated - i.e., not to $\bot$ or new pk)
    \State (c) $(\upk^*,\sppk) \notin \mathrm{N}$ %//no replay
    \State (d) $\usk^* \notin C_{u}$ %//$\usk^*$ is not corrupted 
    \State \Return 1
\EndIf
\State \Return 0
\end{algorithmic}\label{exp:non-frameability}
\end{algorithm}

In the security experiment, the adversary is given access to a number of oracles.
Intuitively, upon the adversary's request, $\Oracle^{GenU}$ creates a new honest user, and hands the corresponding public key $\upk$ to $\adv$.
Similarly, the adversary may request the creation of additional service providers through a call to $\Oracle^{GenSP}$, again obtaining only the corresponding public key $\sppk$.
For any pair of users and service providers, the adversary may now request the corresponding user pseudonym, together with a proof of well-formedness, through $\Oracle^{Nym}$;
in case that the indicated service provider or user does not exist, the oracle returns $\bot$.
Finally, $\adv$ can corrupt users, thereby obtaining the corresponding secret key $\usk$, through $\Oracle^{CorruptU}$.

We now formally define these oracles in \cref{o:genu,o:gensp,o:nym,o:corruptu}.
Note that no oracle for corrupting service providers exists:
as discussed earlier, untrusted service providers do not obtain their secret key, so that they are unable to generate pseudonyms by themselves.
Trusted (e.g., publicly hosted) service providers should only receive their $\spsk$ within secure hardware, such that key extraction is not possible, either.

\begin{oracle}
\caption{User generation oracle $\Oracle^{GenU}$}\label{o:genu}
\begin{algorithmic}[1]
\State $(\usk,\upk) = \KeyGen_{user}(msk)$ 
\State $\mathrm{PK_u} = \mathrm{PK_u} \cup \{\upk\}$
\State $\mathrm{SK_u} = \mathrm{SK_u} \cup \{\usk\}$
\State $\mathrm{KP_u} = \mathrm{KP_u} \cup \{(\usk,\upk)\}$
\State \Return $\upk$
\end{algorithmic}
\end{oracle}

\begin{oracle}
\caption{Service provider generation oracle $\Oracle^{GenSP}$}\label{o:gensp}
\begin{algorithmic}[1]
\Ensure Service provider public key $\sppk$
\State $(\spsk,\sppk) = \KeyGen_{user}(msk)$ 
\State $\mathrm{PK_{sp}} = \mathrm{PK_{sp}} \cup \{\sppk\}$
\State $\mathrm{SK_{sp}} = \mathrm{SK_{sp}} \cup \{\spsk\}$
\State $\mathrm{KP_{sp}} = \mathrm{KP_{sp}} \cup \{(\spsk,\sppk)\}$
\State \Return $\sppk$
\end{algorithmic}
\end{oracle} 

\begin{oracle}
\caption{Pseudonym generation oracle $\Oracle^{Nym}$}\label{o:nym}
\begin{algorithmic}[1]
\Require Service provider public key $\sppk$, user public key $\upk$
\Ensure pseudonym $\nym$ and proof $\pi$. 
\If{$\sppk \notin \mathrm{PK_{sp}}$ or $\upk \notin \mathrm{PK_u}$:}
    \Return $\perp$
\EndIf
%\State Find $spsk \in \mathrm{SK_{sp}}$ s.t $\spsk,\sppk = \KeyGen_{}(\pp)$
\State Find $(\usk,\upk) \in \mathrm{KP_u}$
\State Compute $(\nym,\pi) = \NymGen_{user}(\usk,\mpk,\sppk,\mpk)$
\State $N \leftarrow N \cup \left\{(\upk,\sppk)\right\}$
%\State $N \leftarrow N \cup \left\{(nym,\Pi,c)\right\}$
\State \Return $\nym, \pi$
\end{algorithmic}
\end{oracle}

\begin{oracle}[ht!]
\caption{Corrupted user oracle $\Oracle^{CorruptU}$}\label{o:corruptu}
\begin{algorithmic}[1]
\Require User public key $\upk$
\Ensure User secret key $\usk$
\If{$upk \notin \mathrm{PK_u}$}
    \Return $\perp$
\EndIf
\State Find $(\usk,\upk) \in \mathrm{KP_u}$
\State $C_u \leftarrow C_u \cup \left\{(\usk,\upk)\right\}$ 
\State \Return $\usk$
\end{algorithmic}
\end{oracle}

\subsubsection{Anonymity}
In the following we now formalize the main privacy property of delegatable pseudonym systems.
In a nutshell, the goal is that no adversary is able to decide significantly better than random guessing whether by which of two (adversarially chosen) users a specific pseudonym was created.

\begin{definition}[\textbf{Anonymity}]
    A delegatable pseudonym system is said to be anonymous, if for every PPT adversary $\adv$ there exists a negligible function $\varepsilon(.)$ such that: 
    $$
    \left|\mathbb{P}\left[ \mathrm{Exp}_{\adv}^{\mathrm{Anonymity}}(\lambda) = 1 \right] -\frac{1}{2} \right|\leq \varepsilon(\secpar),
    $$
    where $\mathrm{Exp}_{\adv}^{\mathrm{Anonymity}}(\lambda)$ is defined in \cref{exp:anonymity}.  
\end{definition}

Informally, anonymity is modeled through a left-or-right (LoR) oracle, which the adversary can call to generate pseudonyms for arbitrary service providers, and which will always return the pseudonym either for the first or for the second input.
The adversary wins if it can decide which is the case, under the following constraints:
$(b)$ for the same service provider, the adversary must not query inconsistent inputs, e.g., $(\upk_1,\upk_2)$ and then $(\upk_1,\upk_3)$, as this would trivially allow the adversary to decide which output it receives;
$(c)$ similarly, it must not request an honest pseudonym for a service provider that it also used in the LoR oracle, e.g., $(\upk_1,\upk_2)$ to LoR and also obtain a pseudonym for $\upk_1$, as this trivially allows to guess the output it receives;
$(d)$ it must not send $\upk$'s that it (previously or later on) corrupted to the LoR oracle, as again it could simply check for the output of the LoR oracle; and finally
$(e)$ it must not request pseudonyms for corrupted service providers or pseudonyms. While the latter is clear similarly to $(d)$, it is rather an artifact from our proof strategy which however does not impose real limitations in practice, as we (as discussed before) assume that $\spsk$ is never given to the adversary in the plain. 

\begin{algorithm}[ht!]
\caption{$-$ $\mathrm{\mathbf{Exp}}_{\adv}^{\mathrm{Anonymity}}(\lambda)$}\label{exp:anonymity}
\begin{algorithmic}[1]
  \State $b \overset{\$}{\leftarrow} \{0,1\}$
\State $\mathrm{LR}, \mathrm{N},\mathrm{KP_u}, \mathrm{PK_u}, \mathrm{SK_{u}}, \mathrm{PK_{sp}}, \mathrm{SK_{sp}},\mathrm{C_u}, \mathrm{C_{sp}} \longleftarrow \emptyset$
\State $\pp \longleftarrow \ParGen(1^\lambda)$
\State $(\msk, \mpk) \longleftarrow \KeyGen(\pp)$

\State $b' \longleftarrow \adv^{\Oracle^{LoR},\Oracle^{GenU},\Oracle^{GenSP}, \Oracle^{Nym}, \Oracle^{CorrupU}, \Oracle^{CorruptSP}(\pp,\mpk)}$
\NoThen
\If{:}
    \State (a) $b' = b$
    
    \State (b) $\forall \mathrm{A} = \left\{\upk_0,\upk_1\right\}, \mathrm{A'} = \left\{\upk'_0,\upk'_1\right\}$ s.t $(A,sppk), (A',sppk)\in \mathrm{LR},$ we have $ A = A'$ or $A \cap A' = \emptyset $
    
    \State (c) $\forall (\upk,\sppk) \in \mathrm{N}, \upk' \in \mathrm{PK_u} : \nexists(\left\{\upk,\upk'\right\}, \sppk)$ or $(\left\{\upk',\upk\right\}, \sppk) \in \mathrm{LR}$ 
    
    \State (d) $\forall (\left\{\upk,\upk'\right\}, \sppk) \in \mathrm{LR}, \upk,\upk' \notin \mathrm{C_u}$ and $\sppk \notin \mathrm{C_{SP}}$
    
    \State (e) $\forall (\upk,\sppk) \in \mathrm{N}, \upk \notin \mathrm{C_u}$ and $ \sppk \notin \mathrm{C_{sp}}$ 
    \State \Return 1
\EndIf
\State \Return 0
\end{algorithmic}
\end{algorithm}

The definition of the experiment depends on two additional oracles beyond those used in the previous section.
That is, $\Oracle^{LoR}$ always returns the $b^{\text{th}}$ pseudonym and proof for two inputs $(\upk,\upk')$ for a given service provider.
Finally, service providers can be corrupted through the $\Oracle^{CorruptSP}$ oracle.
We formally define these oracles in \cref{o:lor,o:corruptsp}.

\begin{oracle}
\caption{Left or Right oracle $\Oracle^{LoR}$}\label{o:lor}
\begin{algorithmic}[1]
\Require Set of two public keys $\left\{\upk_0,\upk_1\right\}$ and service provider public key $\sppk$ 
\Ensure Pseudonym $\nym_b$
\If{$upk_0 \notin \mathrm{PK_u}$ or $upk_1 \notin \mathrm{PK_u}$ or $sppk \notin \mathrm{PK_{sp}}$}
    \Return $\perp$
\EndIf

\State Find $(\usk_i,\upk_i) \in \mathrm{KP_u}$ %s.t $\usk_i,\upk_i = \KeyGen_{user}(\msk)$ 
for $i = 0,1$ 
\State Compute $(\nym_i,\pi_i) = \NymGen_{user}
(\usk_i,\mpk,\sppk)$ for $i = 0,1$ 
\State $LR \leftarrow LR \cup \left\{(\left\{\upk_0,\upk_1\right\},\sppk)\right\}$
\State \Return $\nym_b,\pi_b$
\end{algorithmic}
\end{oracle}

\begin{oracle}
\caption{Corrupted service provider oracle $\Oracle^{CorruptSP}$}\label{o:corruptsp}
\begin{algorithmic}[1]
\Require Service provider public key $\sppk$
\Ensure Service provider secret key $\spsk$
\If{$sppk \notin \mathrm{PK_{sp}}$}
    \Return $\perp$
\EndIf
\State Find $(\spsk,\sppk) \in \mathrm{KP_{sp}}$ %s.t $(\spsk,\sppk) = \KeyGen_{SP}(\msk)$
\State $C_{sp} \leftarrow C_{sp} \cup \left\{(\spsk,\sppk)\right\}$ 
\State \Return $spsk$
\end{algorithmic}
\end{oracle}

% \section{Generic Construction}\label{s:generic}
% In the following we now introduce a generic construction, and provide rigorous proofs that it achieves non-frameability and anonymity.

\section{Construction} \label{s:generic}
In the following we provide a blackbox construction from a set of cryptographic primitives as defined in \cref{s:preliminaries} and provide rigorous proofs that it achieves non-frameability and anonymity.
Therefore, let $\Sig$ be a signature scheme, $\Enc$ be an encryption scheme, $\NIKE$ be a non-interactive key exchange protocol with secret key to public key injective property, and $\NIZK$ be a non-interactive weakly simulation-sound proof system.
Without loss of generality, we assume that the schemes are compatible in the sense that the message spaces of $\Sig$ and $\Enc$ include  the public key space of $\NIKE$.   

The intuition of our construction is now as follows.
The central authority generates the NIKE keys for all users and service providers, and pseudonyms are simply shared keys derived from the NIKE between a service provider and a user, which immediately yields commutativity, i.e., they will always obtain the same pseudonym.

In order to ensure that user-generated pseudonyms are authentic, the user does not only receive their secret key of the NIKE, but in addition a signature from the central authority on the public key.
Now, when the user generates a pseudonym $\nym$ they also compute a proof $\pi$ that they used the service provider's public key and a user secret key for which they also know a signature on the corresponding user public key. 

Finally, to enable the central authority to trace users, the user in addition adds an encrypted version of their public key to the proof goal.

A detailed and formal specification of ${\bpkplus}_{\Enc,\Sig,\NIKE,\NIZK}$ is given in Construction ~\ref{Framework}.

\begin{construction}[t!]
\caption{$\bpkplus_{\Enc,\Sig,\NIKE,\NIZK}$ with public-key key encryption scheme $\Enc$, signature scheme $\Sig$, NIKE $\NIKE$ and NIZK $\NIZK$ such that the public-key space of $\NIKE$ is (a subset of) the message spaces of $\Enc$ and $\Sigma$. We also define $\mathcal{R}$ as the set in which the random used in the encryption process of $\Enc$ is sampled.}\label{Framework}

\begin{description}[leftmargin=20pt,labelindent=0pt]
    \item $\ParGen(1^\lambda)$ :\begin{algorithmic}[1]
        \State $\pp_\Sig \gets \Sig.\ParGen(1^\lambda)$
        \State $\pp_{\Enc} \gets \Enc.\ParGen(1^\lambda)$
        \State $\pp_{\mathrm{\NIZK}} \gets \NIZK.\ParGen(1^\lambda)$
        \State $\pp_{\mathrm{\NIKE}} \gets \NIKE.\ParGen(1^\lambda)$
        \State \Return $\pp = \{\pp_\Sig,\pp_{\Enc},\pp_\NIZK,\pp_{\mathrm{\NIKE}}\}$
    \end{algorithmic}
    
    \item $\KeyGen(\pp)$ : \begin{algorithmic}[1]
        \State$(\msk_\Sig,\mpk_\Sig) \gets \Sig.\KeyGen(\pp_\Sig)$  
        \State $(\msk_{\Enc},\mpk_{\Enc}) \gets \Enc.\KeyGen(\pp_\Enc)$ 
        \State $\msk \gets (\msk_\Sig,\msk_{\Enc})$ 
        \State $\mpk \gets (\mpk_\Sig,\mpk_{\Enc})$ 
        \State \Return $\msk,\mpk$
    \end{algorithmic}
    \item $\KeyGen_{user}(\msk)$ : \begin{algorithmic}[1]
        \State $(\usk',\upk) \gets \NIKE.\KeyGen(\pp_{\mathrm{\NIKE}})$
        \State $\sigma \gets \Sig.\Sign(\upk,\msk_{\Sig})$
        \State \Return $\usk = (\usk',\sigma),\upk$
    \end{algorithmic}
    
    \item $\KeyGen_{SP}(\msk)$ : \begin{algorithmic}[1]
        \State$(\spsk,\sppk) \gets \NIKE.\KeyGen(\pp_{\NIKE})$
        \State \Return $\spsk ,\sppk$
    \end{algorithmic}
    
    \item $\NymGen_{user}(\usk,\mpk,\sppk)$ :\begin{algorithmic}[1]
        \State Parse $\usk$ as $(\usk',\sigma)$
        \State $\nym \gets \NIKE.\ShareKey(\sppk,\usk')$
        \State $c \gets \Enc.\Encrypt(\mpk,\upk;r)\quad r\getsr \mathcal{R}$ 
        \State $\pi \gets (c,\NIZK.\Prove[(\usk',\upk,\sigma,r) :$ 
            \begin{align*} 
                &\ \Sig.\Verify(\mpk,\sigma,\upk) = 1\\ & \land c = \Enc.\Encrypt(\mpk,\upk;r)\\ &\land \upk = \mu(\usk')\\ &\land \nym = \NIKE.\ShareKey(\sppk,\usk')])
            \end{align*}
        \State \Return $\nym,\pi$
    \end{algorithmic}
    
    \item $\NymGen_{SP}(\spsk,\mpk,\upk)$ : \begin{algorithmic}[1]
        \State \Return $\nym \gets\NIKE.\ShareKey(\upk,\spsk)$
    \end{algorithmic}
    
    \item $\NymVerify(\mpk,\sppk,\nym,\pi)$ :\begin{algorithmic}[1]    
        \State \Return $\NIZK.\Verify(\pp_\NIZK,(\mpk,\sppk,\nym),\pi)$
    \end{algorithmic}
    
    \item $\Open(\pi,\msk)$ :\begin{algorithmic}[1]
        \State Parse $\pi$ as $(c,\pi')$
        \State \Return $ \upk \gets \Enc.\Decrypt(\msk,c)$
    \end{algorithmic}
\end{description}

\end{construction}

We now obtain the following core results regarding the security of our generic construction.

\begin{theorem}\label{thm:nonframeability}
    Let $\Enc$ be a public-key encryption scheme, $\Sig$ be an EUF-CMA signature scheme, $\NIKE$ be a NIKE scheme satisfying indistinguishability and secret to public key injective properties and $\NIZK$ be a ZKPoK system. Then, $\bpkplus_{\Enc,\Sig,\NIKE,\NIZK}$ achieves non-frameability against an adaptive PPT adversary $\adv$.
\end{theorem}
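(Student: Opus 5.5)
We prove the theorem by a sequence of game hops, treating the two winning conditions of \cref{exp:non-frameability} separately. Throughout we use the extractor of the weakly simulation-sound extractable $\NIZK$ (the ZKPoK assumption). In a first hop, we replace $\NIZK.\ParGen$ by the simulator $\mathcal{S}$ outputting $(crs,\tau,\zeta)$; by the first condition of weak simulation-sound extractability (and adaptive zero-knowledge), this changes the adversary's success probability only negligibly. In a second hop, all proofs returned by $\Oracle^{Nym}$ are computed with $\mathcal{S}_2(crs,\tau,\cdot)$ on the honest statement $(\mpk,\sppk,\nym,c)$ instead of with the real witness. The reduction knows $\usk$ and computes $\nym$ and $c$ honestly, so by zero-knowledge the difference is again negligible. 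In the final game, the challenger applies $\mathcal{E}$ to the output $(\nym^*,\pi^*=(c^*,\pi'^*))$ to obtain $(\usk'^*,\upk^*,\sigma^*,r^*)$. By weak extractability, except with negligible probability the witness is valid whenever $\pi^*$ verifies and the statement is fresh. Freshness holds: a statement queried to the simulator contains a pair $(\upk,\sppk)\in\mathrm{N}$, and by correctness of $\Enc$ the ciphertext $c^*$ determines $\Open$'s output. A repeated statement would therefore violate condition (c) of the second branch, or, for the first branch, open to an existing user.

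\emph{Case 1 (non-existing user).} If $\pi^*$ verifies and the decryption of $c^*$ is not a generated $\upk$, then correctness of $\Enc$ together with $c^*=\Enc.\Encrypt(\mpk,\upk^*;r^*)$ gives $\Open=\upk^*$. Hence $\upk^*$ was never signed, yet $\Sig.\Verify(\mpk_\Sig,\sigma^*,\upk^*)=1$. We build an EUF-CMA forger $\mathcal{B}$ that embeds the challenge verification key as $\mpk_\Sig$, uses its signing oracle inside $\Oracle^{GenU}$, generates all other keys itself, and outputs $(\upk^*,\sigma^*)$. The forged message is outside $\mathcal{Q}^{\Sign}$ exactly because no key pair for $\upk^*$ exists in $\mathrm{KP_u}$.

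\emph{Case 2 (framing an honest user).} Here the extracted $\upk^*$ equals some honest, uncorrupted $\upk_j$. The statement also gives $\upk^*=\mu(\usk'^*)$, and injectivity of $\mu$ yields $\usk'^*=\usk'_j$, so the reduction has recovered an honest user's NIKE secret key. We turn this into a NIKE distinguisher. The reduction guesses the index $j$ among the polynomially many $\Oracle^{GenU}$ queries, losing a polynomial factor, and also obtains a challenge-SP key $\pk^*$ via the NIKE game's $\Oracle^{GenU}$. It sets $\upk_j:=pk_0$, obtains another honest key $pk_1$, and creates all other users and SPs itself, signing keys with its own $\msk_\Sig$. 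Oracle calls $\Oracle^{Nym}(\upk_j,\sppk)$ are answered using $\Oracle^{\NIKE}$ for $\nym$, an honest encryption for $c$, and a simulated proof, which is possible since the second hop removed the need for $\usk_j$. Corruption of $\upk_j$ aborts, which is consistent with condition (d). After extracting $\usk'_j$, the reduction receives $k^*$ for $(\pk^*,\cdot)$, computes $\ShareKey(\pk^*,\usk'_j)$, and outputs $b'=0$ iff this equals $k^*$.

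The main obstacle is making this last step sound. We must ensure that the shared keys of $pk_0$ and $pk_1$ with $\pk^*$ differ with non-negligible probability. Otherwise IND-NIKE itself would be broken, since a distinguisher can compare via $\Oracle^{\NIKE}$ on other keys, or we argue it directly from key recovery. Alternatively, one could use that key recovery trivially breaks IND-NIKE, because recovering $\usk'_j$ from $\upk_j=\mu(\usk'_j)$ is a one-wayness attack. We must also check carefully that extraction is applied only to non-simulated statements, which requires the freshness argument above relating $\mathrm{N}$ to the statements queried to $\mathcal{S}_2$. Summing the hybrid losses and the two reductions gives a negligible bound.
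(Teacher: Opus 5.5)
\textbf{Gap in Case 2.} Your simulation hops, the freshness argument for the extracted statement, and Case 1 are fine; the freshness argument is more careful than the paper's. The gap is the final step of Case 2. Your IND-NIKE distinguisher gains advantage only to the extent that $\ShareKey(\pk^*,\sk_0)\neq\ShareKey(\pk^*,\sk_1)$, and the theorem's hypotheses do not guarantee this.

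The paper's IND-NIKE is a left-or-right game asking which of two honest secret keys produced $k^*$. A NIKE whose shared key ignores the secret key (e.g.\ constant $\ShareKey$) satisfies it trivially, because $k^*$ is then independent of $b$. Pair such a NIKE with an injective, efficiently invertible $\mu$ (e.g.\ $\pk=\sk$): all stated NIKE assumptions hold, yet recovering $\usk'_j$ from $\upk_j$ is trivial. So no reduction that turns key recovery into an IND-NIKE break can work in general.

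Your proposed ways around this also fail:
\begin{itemize}
\item ``Otherwise IND-NIKE itself would be broken'' is backwards. Coinciding shared keys make the challenge independent of $b$, which makes IND-NIKE easier to satisfy, not broken.
\item ``Key recovery trivially breaks IND-NIKE'' holds only for a real-or-random style definition, or under an extra non-degeneracy assumption. Such an assumption happens to hold for DH in prime-order groups, but it is not part of the theorem.
\end{itemize}
A minor point: $\Oracle^{\NIKE}$ only accepts keys generated by the NIKE challenger. For self-generated SPs you would compute $\nym$ via $\spsk$ and NIKE symmetry instead.

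\textbf{The missing idea.} Reduce Case 2 to EUF-CMA as well. This is what the paper does, informally, in its Game 4: the adversary must know a valid $\sigma^*$ on $\upk^*$, and for an honest user the only way to obtain one is corruption.

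To make this rigorous, note that after your second hop the honest user's signature $\sigma_j$ is revealed nowhere except through $\Oracle^{CorruptU}$. An EUF-CMA reduction can therefore:
\begin{itemize}
\item embed the challenge verification key as $\mpk_\Sig$;
\item guess the index $j$ of the framed user;
\item obtain signatures from its signing oracle for all $\upk_i$ with $i\neq j$, and never query $\upk_j$;
\item simulate every proof involving user $j$, and abort if $\upk_j$ is corrupted (consistent with condition (d)).
\end{itemize}
The extracted witness then contains $\sigma^*$ with $\Sig.\Verify(\mpk_\Sig,\sigma^*,\upk_j)=1$ on an unqueried message, which is a forgery. The loss is a factor $q$ for the guess, assuming as usual that honestly generated user keys collide only with negligible probability.

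This single reduction also subsumes Case 1, and it does not use NIKE indistinguishability at all. Your Case 2 simulation already has exactly this shape; you only need to embed the signature challenge instead of the NIKE challenge.
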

Intuitively, to see non-frameability, note that an adversary would either have to break the proof-of-knowledge guarantees of the NIZK (to prove that it did a valid computation even though it did not) or forge a signature (in order to introduce a new, non-existing user to the system), yet both these cases are excluded by the guarantees of the building blocks.
For the full proof, we refer to \cref{s:proof:nonframeability}.

\begin{theorem}\label{thm:anonymity}
     Let $\Enc$ be an IND-CPA public-key encryption scheme, $\Sig$ be an EUF-CMA signature scheme, $\NIKE$ be a NIKE scheme satisfying indistinguishability and secret to public key injective properties and $\NIZK$ be a ZKPoK system. Then, $\bpkplus_{\Enc,\Sig,\NIKE,\NIZK}$ achieves anonymity against an adaptive PPT adversary $\adv$.
\end{theorem}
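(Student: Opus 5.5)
We prove \cref{thm:anonymity} by a sequence of games, starting from $\mathrm{Exp}_{\adv}^{\mathrm{Anonymity}}(\lambda)$ with a uniformly random bit $b$ and ending in a game in which the view of $\adv$ is independent of $b$. Throughout we may assume that conditions (b)--(e) of the experiment hold. An adversary violating them only causes the experiment to output $0$, and a reduction can check these conditions itself, since it keeps track of all queries.

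\textbf{Game 1: simulated proofs.} We replace $\pp_\NIZK$ by a simulated reference string $(crs,\tau)\gets\mathcal{S}_1(1^\lambda)$. Every proof produced by $\Oracle^{Nym}$ and $\Oracle^{LoR}$ is now computed with $\mathcal{S}_2(crs,\tau,\cdot)$ on the statement $(\mpk,\sppk,\nym)$ together with the ciphertext $c$. The change is indistinguishable by adaptive zero-knowledge. The reduction generates $\msk$ itself and answers oracle queries with real witnesses $(\usk',\upk,\sigma,r)$; these witnesses satisfy the relation by construction, so the proof oracle never returns $\bot$.

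\textbf{Game 2: encrypting a fixed key.} In $\Oracle^{LoR}$ and $\Oracle^{Nym}$, the ciphertext $c$ now encrypts a fixed public key $\upk_{\mathrm{fix}}$, produced once by $\NIKE.\KeyGen$, instead of $\upk_b$. Because the proofs are simulated, they no longer need the encryption randomness $r$. We move from Game 1 to Game 2 by a hybrid over the polynomially many oracle calls. Each hybrid step is a reduction to IND-CPA that embeds the challenge key as $\mpk_\Enc$. This works because the anonymity experiment has no opening oracle, so the reduction never needs $\msk_\Enc$. After this game, $\pi$ carries no information about $b$. The only remaining dependence on $b$ is the value $\nym=\NIKE.\ShareKey(\sppk,\usk'_b)$.

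\textbf{Game 3: random pseudonyms.} This is the main obstacle. For each LoR pair $(\{\upk_0,\upk_1\},\sppk)$ we want to replace $\nym_b$ by a value independent of $b$. By condition (b), pairs queried for the same $\sppk$ are either identical or disjoint, so returning the same value on repeated queries is consistent. We do this with a hybrid over LoR queries, reducing to IND-NIKE. The reduction guesses one honest user index $i^*$ and one service-provider index $j^*$ (a polynomial loss). It obtains their public keys, along with all other NIKE keys involved, through $\Oracle^{GenU}$ of the NIKE game. It signs every user key itself, since it holds $\msk_\Sig$. Other pseudonyms are answered through $\Oracle^{\NIKE}$, which is allowed as long as they do not involve the challenged pair. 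Corruption queries are forwarded to $\Oracle^{CorruptU}$. Conditions (d) and (e) guarantee that the keys in LoR queries are never corrupted. Condition (c) guarantees that the reduction never has to return an honest $\Oracle^{Nym}$ value for $\upk_0$ or $\upk_1$ under the challenged $\sppk$.

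The delicate point is that IND-NIKE gives only ``which of $\sk_0,\sk_1$ was used'', not real-or-random. So instead of switching the answer to a random value, we directly switch the LoR answer from user $\upk_0$ to user $\upk_1$, one $(\{\upk_0,\upk_1\},\sppk)$ class at a time. The challenge key $k^*$ is used as the answer for that class. If the challenge bit is $0$, the reduction simulates the hybrid where this class still answers with $\upk_0$; if it is $1$, the hybrid where it answers with $\upk_1$. The proofs are simulated and the ciphertexts are fixed, so nothing else depends on which user was chosen. Chaining these hybrids takes the game with $b=0$ to the game with $b=1$ with negligible total loss. This bounds $\bigl|\mathbb{P}[\mathrm{Exp}_{\adv}^{\mathrm{Anonymity}}(\lambda)=1]-\tfrac12\bigr|$ by a sum of negligible terms.

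The secret-to-public-key injectivity of $\NIKE$ is used only to ensure that the NIZK relation is well defined in the simulations. EUF-CMA of $\Sig$ is not actually needed, because the anonymity experiment has no opening oracle.
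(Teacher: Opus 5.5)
Your proof is correct and takes a different route from the paper's in the final step. The first two hops match the paper: simulate the NIZK under adaptive zero-knowledge, then use IND-CPA to strip $\upk_b$ from the ciphertext, which works because there is no opening oracle. Encrypting a fixed honest key rather than the paper's $0$, and making the per-ciphertext hybrid explicit, are small improvements, since $0$ need not lie in the message space (e.g.\ ElGamal over $\GG_1$).

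The paper's Game~3 replaces every user's $\usk'$ by an independent random $\alpha$, kept consistent per $\upk$ via a list $L$. It justifies this with a one-line appeal to IND-NIKE and declares the $\Oracle^{LoR}$ outputs independent of $b$. If justified, that would be a single tight hop, but it does not follow from the notion the paper defines. IND-NIKE is a which-of-two-keys game, not real-or-random. Moreover, after re-keying, $\nym_b$ is still user $b$'s pseudonym and remains correlated with that user's $\Oracle^{Nym}$ answers at other service providers, so independence of $b$ still needs a computational argument.

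Your hybrid over LoR classes embeds $(\sppk,\upk_0,\upk_1)$ as the NIKE challenge and $k^*$ as that class's answer. This uses exactly the defined notion, and conditions (b)--(d) ensure the reduction never needs a forbidden shared key. The price is a loss linear in the number of classes. The reduction should obtain all NIKE keys from the challenger and guess the hybrid index, rather than ``one user index''.

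A few details need tightening.
\begin{itemize}
\item Since $\mathrm{LR}$ stores unordered sets, (b) permits both $(\upk_0,\upk_1)$ and $(\upk_1,\upk_0)$ at the same $\sppk$. That class then needs both $\ShareKey(\sppk,\usk'_0)$ and $\ShareKey(\sppk,\usk'_1)$, one of which is the forbidden challenge query, so a single swap fails. Either read the oracle as ordering its input canonically, or detour through a fresh challenger key never shown to $\adv$. Writing $x,y,z$ for the shared keys of $\sppk$ with $\upk_0$, $\upk_1$ and the fresh key, go $(x,y)\to(z,y)\to(z,x)\to(y,x)$. Each step is a legal which-key challenge, with the other coordinate obtained from $\Oracle^{\NIKE}$.
\item The paper's IND-NIKE grants no oracles after $k^*$. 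To answer the anonymity adversary's post-challenge queries, either assume the standard definition \cite{EC:CasKilSho08,PKC:FHKP13}, or let the reduction corrupt all non-challenge keys just before issuing the challenge and generate later keys itself.
\item Your ``without loss of generality'' remark needs the convention that an invalid run outputs a random bit. As literally written, an adversary violating (d) forces output $0$ and thus has advantage $1/2$. This defect is shared with the paper's definition.
\end{itemize}
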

Anonymity basically follows from the fact that user prove their statements in zero-knowledge, and that the encryption of their identity does not leak information to a computationally bounded adversary either. 
Furthermore, the indistinguishability property of the NIKE guarantees that also known pseudonyms cannot be used to link users to fresh $\sppk$'s.
For the full proof, we refer to \cref{s:proof:anonymity}.

\section{Instantiation}\label{s:instantiation}

In the following, we consider a pairing-based instantiation of our generic \bpkplus\ construction with a concrete selection of compatible schemes. 

\subsection{Building Blocks}
The main constraint for our construction is the choice of a signature scheme which can efficiently be combined with a NIZK to support the proof of knowledge of a valid signature from the central authority. We are inspired by the constructions of \cite{CCS:CamDriDub17,CANS:BEKRS21} which are based on Groth's structure-preserving signatures~\cite{AC:Groth15}. We only need to sign a single element in one of the source groups and thus give a presentation of the signature scheme adapted to our needs\footnote{We present the variant that signs a group element in $\GG_1$. By switching the role of $\GG_1$ and $\GG_2$ in the construction, the signature scheme would allow signing group elements in $\GG_2$.}:
\begin{itemize}
	\item $\Groth_1.\ParGen(1^\secpar)$ generates public parameters of a bilinear group $(\GG_1,\GG_2,\GG_T,e,p,G,\hat{G})$ of prime order $p$, where $G\in\GG_1$ and $\hat{G}\in\GG_2$ are generators. It additionally outputs an element $Y\getsr\GG_1$.
  \item $\Groth_1.\KeyGen()$ samples $\sk\gets\ZZ_p^*$ and sets $\pk\gets\hat{G}^\sk$.
  \item $\Groth_1.\Sign(\sk,\msg)$ samples $r\gets\ZZ_p^*$ and computes a signature $\sigma=(\hat{R},S,T)=(\hat{G}^r,(Y\cdot G^\sk)^{1/r},(Y^\sk\cdot\msg)^{1/r})$.
  \item $\Groth_1.\Rand(\sigma)$ rerandomizes a valid signature $\sigma$ by sampling $r'\gets\ZZ_p^*$ and outputting a randomized signature $\sigma'=(\hat{R}',S',T')=(\hat{R}^{r'},S^{1/r'},T^{1/r'})$ on the same message.
  \item $\Groth_1.\Verify(\pk,\sigma,\msg)$ outputs $1$ if and only if it holds that $e(S,\hat{R})=e(Y,\hat{G})\cdot e(G,\pk)$ and $e(T,\hat{R})=e(Y,\pk)\cdot e(\msg,\hat{G})$.      
\end{itemize}

For the NIKE we need to also select schemes with public keys in $\GG_1$. Under the XDH assumption, note that we can instantiate the Diffie-Hellman~\cite{DifHel76} NIKE in $\GG_1$. Hence, we obtain the following NIKE:
\begin{itemize}
  \item $\DHN.\ParGen(1^\secpar)$ sets $\pp$ to $\GG_1$ and $H \getsr \GG_1$. Returns $\pp$.
  \item $\DHN.\KeyGen()$ samples $\sk \getsr \ZZ_p^*$ and sets $\pk \gets H^\sk$.
  \item $\DHN.\ShareKey(\pk', \sk)$ computes $\shk \gets \pk'^\sk$.
\end{itemize} 

Similarly, under the XDH assumption, ElGamal encryption~\cite{C:ElGamal84} provides a secure public-key encryption scheme in $\GG_1$ with message space $\GG_1$:
\begin{itemize}
  \item $\ElGamal.\ParGen(1^\secpar)$ sets $\pp$ to $\GG_1$ and $K \getsr \GG_1$. Returns $\pp$.
  \item $\ElGamal.\KeyGen()$ samples $\sk \getsr \ZZ_p^*$ and sets $\pk \gets K^\sk$.
  \item $\ElGamal.\Encrypt(\pk, \msg)$ computes $r \getsr \ZZ_p^*$, return $(K^r,\allowbreak \pk^r \msg)$.
  \item $\ElGamal.\Decrypt(\sk, (c_1, c_2))$ returns $c_2 c_1^{-\sk}$.
\end{itemize}

\subsection{Protocol Specification}
Putting the above building blocks into the generic construction presented in \cref{s:generic}, we now obtain the concrete instantiation described in \cref{const:concrete}.

\begin{construction}[t!]
\caption{Instantiating $\bpkplus$ with ElGamal encryption, Groth signatures, Diffie-Hellman key exchange, and Schnorr proofs.}\label{const:concrete}

\begin{description}[leftmargin=20pt,labelindent=0pt]
    \item $\ParGen(1^\lambda)$ :\begin{algorithmic}[1]
        \State Generate public parameters of a bilinear group $\pp'= (\GG_1,\GG_2,\GG_T,e,p,G,\hat{G})$ of prime order $p$, where $G\in\GG_1$ and $\hat{G}\in\GG_2$ are generators
        \State Sample $Y\getsr\GG_1$, $H \getsr \GG_1$, and $K \getsr \GG_1$.
        \State \Return $\pp = (\pp',Y,H,K)$.
    \end{algorithmic}
    
    \item $\KeyGen(\pp)$ : \begin{algorithmic}[1]
        \State  $(\msk_{\Groth_1}, \mpk_{\Groth_1}) \gets \Groth_1.\KeyGen()$ 
        \State $(\msk_\ElGamal, \mpk_\ElGamal) \gets \ElGamal.\KeyGen()$
        \State \Return $\msk \gets (\msk_{\Groth_1}, \msk_\ElGamal)$ and $\mpk \gets (\mpk_{\Groth_1}, \mpk_\ElGamal)$.
    \end{algorithmic}
    
    \item $\KeyGen_{user}(\pp,\msk)$ : \begin{algorithmic}[1]
        \State $(\usk', \upk) \gets \DHN.\KeyGen() = (\usk',H^{\usk'})$ for $\usk'\gets\ZZ_p^*$
        \State $\sigma \gets \Groth_1.\Sign(\upk, \msk_{\Groth_1})=(\hat{R},S,T)=(\hat{G}^r,(Y\cdot G^{\msk_{\Groth_1}})^{1/r},(Y^{\msk_{\Groth_1}}\cdot\upk)^{1/r})$ for $r\gets\ZZ_p^*$
        \State \Return $\usk = (\usk', \sigma)$ and $\upk$.
    \end{algorithmic}
    
    \item $\KeyGen_{SP}(\pp,\msk)$ : \begin{algorithmic}[1]
        \State \Return $\DHN.\KeyGen()= (\spsk,H^\spsk)$ for $\spsk\gets\ZZ_p^*$
    \end{algorithmic}
    
    \item $\NymGen_{user}(\usk,\mpk,\sppk)$ :\begin{algorithmic}[1]
        \State $\nym \gets \DHN.\ShareKey(\sppk, \usk')=\sppk^{\usk'}$
        \State $c = (c_1, c_2) \getsr \ElGamal.\Encrypt(\mpk_\ElGamal, \upk; r) = (K^r, \mpk_\ElGamal^r \upk)$ with $r \getsr \ZZ_p^*$
        \State $\sigma' \gets  \Groth_1.\Rand(\sigma; r') = (\hat{R}', S', T') = (\hat{R}^{r'}, S^{1/r'}, T^{1/r})$  with $r' \getsr \ZZ_p^*$.
        \State $(\hat{R}'',S'',T'')\gets (\hat{R}',S'^{1/\alpha},T'^{1/\beta})$ for $\alpha, \beta\getsr\ZZ_p^*$
        \State $\upk' \gets \upk^{1/s}$ for $s \getsr \ZZ_p^*$
        \State Produce a Schnorr-style proof:
    \begin{align*}
      \pi' \gets \NIZK.&\Prove[(r, s, \usk', \sigma'): \\
      % signature verification
      &e(S'',\hat{R}'')^\alpha=e(Y,\hat{G})\cdot e(G,\mpk_{\Groth_1}) ~\land\\
      &e(T'',\hat{R}'')^\beta =e(Y,\mpk_{\Groth_1})\cdot e(\upk',\hat{G})^s ~\land\\
      % encryption
      &c_1 = K^r ~\land c_2 = \upk'^s\cdot\mpk_{\ElGamal}^r ~\land\\
      % well-formed keys
      &\upk'^{s} = H^{\usk'} ~\land
      % ShareKey
      \nym = \sppk^{\usk'}]
    \end{align*} 
        \State $\pi \gets (\pi', c, \hat{R}'', S'', T'', \upk')$
        \State \Return $(\nym, \pi)$
    \end{algorithmic}
    
    \item $\NymGen_{SP}(\spsk,\mpk,\upk)$ : \begin{algorithmic}[1]
        \State \Return $\DHN.\ShareKey(\upk, \spsk)$.
    \end{algorithmic}
    
    \item $\NymVerify(\mpk,\sppk,\nym,\pi)$ :\begin{algorithmic}[1]    
        \State Parse $\pi$ as $(\pi', c, \hat{R}'', S'', T'', \upk')$. 
         \State \Return the verification result of $\pi'$
    \end{algorithmic}
    
    \item $\Open(\pi,\msk)$ :\begin{algorithmic}[1]
        \State Parse $\pi$ as $(\pi', c, \hat{R}'', S'', T'', \upk')$
        \State \Return $\upk \gets \ElGamal.\Decrypt(\msk_\ElGamal, c)$ 
    \end{algorithmic}
\end{description}
\end{construction}

Note that the proof goal could be slightly simplified, by replacing $\upk'^s$ by $H^{\usk'}$ in all equations, and removing the explicit term.
By proving the relation $\upk=\mu(\usk')$ implicitly, one can reduce the computational complexity of the proof goal by one discrete logarithm, and also reduce the communication complexity by removing $\upk'$ from $\pi$.
However, we refrain from this slight optimization to allow for a direct mapping of the instantiation to our generic construction. 

Note that Schnorr-style proofs made non-interactive using the Fiat-Shamir transform~\cite{C:FiaSha86} induce a simulation-sound extractable NIZK provided that the underlying $\Sigma$-protocol provides quasi-unique responses~\cite{INDOCRYPT:FKMV12} and the statement is used as part of the challenge generation~\cite{AC:BerPerWar12}. As the statement is a conjunction of proofs of knowledge of discrete logarithms, the responses are quasi-unique. Alternatively, the proof system could also be instantiated with Groth-Sahai~\cite{EC:GroSah08} to prove the pairing equations. 

\subsection{Efficiency Evaluation}
The above pairing-based instantiation was implemented\footnote{\url{https://anonymous.4open.science/r/bpkplus-4CE0/}} in Rust 1.85 using the BLS12-381 bilinear pairing via the \texttt{ark-bls12-381} crate\footnote{\url{https://crates.io/crates/ark_bls12_381}, version 0.5.0.}. To demonstrate the efficiency of our construction, we have benchmarked the implementation on an  Intel Core i7-1265U with 16 GB of RAM running Ubuntu 24.04. Since all algorithms except $\NymGen$ and $\NymVerify$ consist of at most 3 efficient group operations, we present the results of $\NymGen$ and $\NymVerify$ in \cref{fig:bench-nymgen,fig:bench-verify}. From the benchmark results we can observe that over 100 runs the runtime over $\NymGen$ averages at $4.94 \pm 0.02$ ms and $\NymVerify$ averages at $7.61 \pm 0.03$ ms.

\begin{figure}[ht]
\centering
  \includegraphics[width=0.47\textwidth,trim={0 0.9cm 0 1cm,clip}]{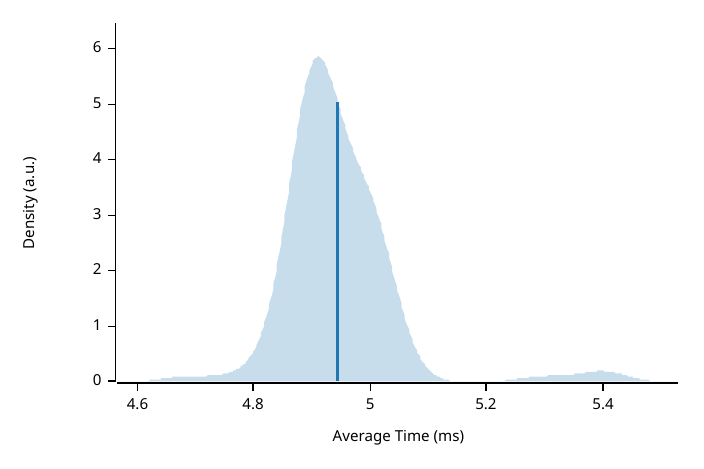}
  \caption{Benchmark results for $\NymGen$.}% of our \bpkplus\ instantiation.}
  \label{fig:bench-nymgen}
\end{figure}

\begin{figure}[ht]
\centering
  \includegraphics[width=0.47\textwidth,trim={0 0.9cm 0 1cm,clip}]{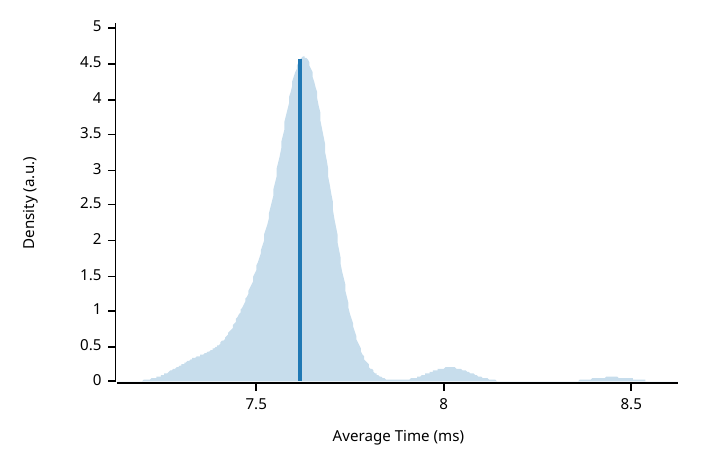}
  \caption{Benchmark results for $\NymVerify$}.% of our \bpkplus\ instantiation.}
  \label{fig:bench-verify}
\end{figure}

\paragraph{Implementation of Central Authority and Service Providers.}
Note that for the central authority it is of paramount importance that its secret key is stored in an HSM. With our scheme, all operations that need to be performed by the central authority consist of group operations in either $\GG_1$ or $\GG_2$. Furthermore, observe that the secret key of $\Groth_1$ is only applied to group elements in $\GG_1$. Hence, we can follow the same approach as in \cite{CCS:HanSla21} which use existing implementations of RSA and elliptic curve groups to implement $\GG_1$ operations on a secure element whereas all operations in $\GG_2$ can be performed outside.

For a Service Provider having access to their secret key, the situation is even simpler. The only operation that needs to be performed in an HSM in this case is one group operation to generate $\nym$. Hence, the same approach can be applied.

\section{Key Rotation and Revocation}\label{s:discussions}
  \olive{As discussed earlier, participation in the bPk system is mandatory by the Austrian eGovernment law, and thus no opt-out functionality is required and not provided.
  Also, no features for revocation or key rotation are foreseen in the current architecture, and similarly our decentralized design does not consider it either.
  However, following modern design paradigms, we next explain how to avoid impersonation attacks and how to achieve forward unlinkability in case of a key compromise by adding mechanisms for updating secret keys. In particular, we could additionally use time-dependent epoch and a stable user identifier in the user's secret key secret. A detailed description of the proposed mechanisms and their implications is provided in
 \cref{sec:key-rot}.}

\section{Conclusion}\label{s:conclusion}
In this paper, we have presented \bpkplus, a user-centric delegatable pseudonym system preserving all required key functionalities of Austria's governmental sector-specific personal identifier (bPk) concept.
We presented the first formal modeling of such a functionality, and provided a generic construction together with a specific instantiation from standard building blocks.
Our reference implementation underpins the practicability of our approach, resulting in runtimes below $20$ms for all operations on standard hardware.

\paragraph{Future Work.}
As our scheme may potentially find applications also in other national eID systems beyond Austria, future proofing the deployment with post-quantum building blocks is of central interest. Our generic construction provides the framework to instantiate the scheme with post-quantum secure building block in a black-box manner, e.g., from lattice-based signatures, public-key encryption schemes and proof systems. While recent progress was made in this area, e.g., \cite{AC:LNPT20,C:LyuNguPla22,C:ACLMT22,CCS:BLNS23}, selecting compatible proof systems and building blocks to obtain a practically efficient instantiation poses an interesting open problem.

\olive{An orthogonal interesting extension inspired, e.g., by \cite{camenisch2017privacy}, could also be to include transparency mechanisms, to allow users to track and audit all pseudonym computations that were carried out on their behalf by any of the other entities (CA or SPs) in the system.}

\smallskip

\noindent\textbf{Acknowledgements.} 
This work was in part supported by the European Union’s Horizon Europe project {\sc licorice} (grant agreement no. 101168311), {\sc prepared}, a project funded by the Austrian security research programme KIRAS of the Federal Ministry of Finance (BMF), and the CHIST-ERA project \textsc{reminder} (grant number PCI2023-145989-2) through the Austrian Science Fund (FWF) project number I 6650-N.
Views and opinions expressed are however those of the authors only and do not necessarily reflect those of the funding agencies. Neither European Union nor the granting authorities can be held responsible for them

\bibliographystyle{ACM-Reference-Format}
\bibliography{bibfile,abbrev3,crypto}

\appendix

\section{Security Proofs}\label{s:proofs}
In the following we now provide detailed formal proofs for our main security properties.
We omit a proof of correctness, as this can trivially be verified based on the construction.
\subsection{Non-frameability (\cref{thm:nonframeability})}\label{s:proof:nonframeability}
\begin{proof}
    A successful adversary $\adv$ for non-frameability against our scheme is able to produce a valid proof $\NIZK$ that opens to a honest $upk$.

    \textbf{Game 0:} This is the original non-frameability game.

    \textbf{Game 1:} As Game 0 but the generation of parameter for the ZKPoK system and the proofs are simulated in $\Oracle^{Nym}$. By reduction, we can show that an adversary $\adv$ able to distinguish this replacement can be used by an adversary $\mathcal{B}$ to break the simulation sound property of $\NIZK$. We give $\mathcal{B}$ the access to $\Oracle^{Nym}$. All other values are generated honestly. $\mathcal{B}$ gives to $\Oracle^{Nym}$ a statement and get a proof. Then he passes the proof to $\adv$. If $\adv$ notices a difference, so does $\mathcal{B}$ with the same probability. Thus, we have : $\big|\mathbb{P}(G_0) - \mathbb{P}(G_1)\big| \leq Adv^{SIM-Sound}_{\NIZK,\adv}(\lambda)$.
    %As game 0 but the ZKPoK $\NIZK$ in $\NymGen_\mathrm{user}$ is simulated. The procedure is organized as follows : We first simulate the Suppose an adversary $\adv$ get access to random oracle $\mathcal{O^\mathcal{S}}$ that output simulated valid proof. Then, by reduction, $\mathcal{O^\mathcal{S}}$ can be called by $\adv$ to generate a valid proof $\pi$ but with wrong witness. This would break the soundness of $\NIZK$. In this step, we have
    %$\big|\mathbb{P}(G_0) - \mathbb{P}(G_1)\big| \leq Adv^{Sound}_\NIZK(\lambda)$.\par

    \textbf{Game 2:} As Game 1, but we abort if we are not able to extract a witness for a valid statement. That is, we abort if we cannot extract $\usk^*,\upk^*,\sigma^*$ and $r^*$ such that the following statements are true:
        \begin{enumerate}
            \item $\Sig.\Verify(\mpk,\sigma^*,\upk^*)$ = 1        where $\sigma^*$ is part of $\usk^*$
            \item $c^* = \Enc.\Encrypt(\mpk,\upk^*;r^*)$
            \item $\upk^* = \mu(\usk'^*)$
            \item $\nym^* = \NIKE.\ShareKey(\sppk,\usk'^*)$ 
        \end{enumerate}
        By the weak simulation-sound extractability property of our proof system, we know that this can only happen with negligible probability, and thus $\big|\mathbb{P}(G_1) - \mathbb{P}(G_2)\big| \leq Adv^{Weak-Ext}_{\NIZK,\adv}(\lambda)$.

    \textbf{Game 3:}
    As the previous game, but we abort if $\upk^*$ has not been generated by the central authority, i.e., it does not belong to an honestly generated user;
    otherwise we abort.
    Now (1), we know that $\sigma^*$ is a valid signature on $\upk^*$. Therefore, the probability to abort is bounded by the probability that $\adv$ breaks the unforgeability of $\Sig$, and thus $\big|\mathbb{P}(G_2) - \mathbb{P}(G_3)\big| \leq Adv^{EUF-CMA}_{\Sig,\adv}(\lambda)$.

    \textbf{Game 4:} As Game 3, but we abort if $\adv$ creates a proof $\pi$ which point to a user public key that was never generated by the central authority. By assumption, the adversary knows $\sigma^*$. Since $\sigma^*$ is part of $\usk^*$ and cannot be forged in an efficient time, one can be deduced that the only way for the adversary to know the signature of $\upk$ is to corrupt $\usk$. $\big|\mathbb{P}(G_3) - \mathbb{P}(G_4)\big|$ = 0 follows.
    
    This proves that our scheme is non-frameable.
\end{proof}

\subsection{Anonymity (\cref{thm:anonymity})}\label{s:proof:anonymity}

\begin{proof}
We define a sequence of games to show that the advantage of an adversary $\adv$ to break the anonymity experiment is negligible.
    
\textbf{Game 0}: This is the original anonymity game.\par

\textbf{Game 1}: As Game 0 but the ZKPoP is simulated  in $\NymGen_{user}$ using $SIM = \{\mathcal{S}_1, \mathcal{S}_2\}$. First, we modify the setup by using $\mathcal{S}_1$ to generate the public parameters and a simulated trapdoor $\tau$. Then, we simulate the proof in $\NymGen_{user}$ with $\mathcal{S}_2$ and $\tau$ as in \cref{simproof}. We update $\Oracle^{LoR}$ and $\Oracle^{Nym}$ by replacing $\pi$ with a simulated proof $\pi^*$ as in \cref{Game1LorNym}. We can show by reduction that an adversary $\adv$ able to distinguish this modification in our scheme can be used by an adversary $\mathcal{B}$ to break zero-knowledge property of $\NIZK$. $\mathcal{B}$ is given access to a prove-oracle that output a proof from a statement in input. Then, it passes the statement (the statements used are always legit) to the prove-oracle. Then, $\mathcal{B}$ passes the generated proofs to $\adv$. If $\adv$ is able to detect a difference, so does $\mathcal{B}$ with the same probability. Thus, we have : $\big|\mathbb{P}(G_0) - \mathbb{P}(G_1)\big| \leq Adv^{ZK}_{\NIZK,\adv}(\lambda)$.\par

%By the zero-knowledge property of NIZK, the adversary gets no other information than the proof is true. From the adversary point of view, $\NIZK^*$ is indistinguishable from $\NIZK$.\par 

\begin{figure}[ht!]
\begin{itemize}
    \item \underline{$\ParGen^*$($1^\lambda$)}:
    \begin{algorithmic}[1]
                \State \fbox{$\pp_\NIZK,\tau \gets \mathcal{S}_1(1^\lambda)$}
                \State \Return $\pp = \{\pp_\Sig,\pp_{\Enc},\pp_\NIZK,\pp_{\mathrm{\NIKE}}\},\tau$
    \end{algorithmic}
    \item \underline{$\NymGen^*_{user}
    (\usk,\upk,\sppk,\mpk,\tau)$}:
    \begin{algorithmic}[1]
           \State $\nym \gets \NIKE.\ShareKey(\sppk,\usk')$
           \State $c \gets \Enc.\Encrypt(\mpk,\upk;r)$ with $r \getsr \mathcal{R}$
           \State \fbox{$\pi^* \gets (c,\mathrm{S}_2(\pp,\tau,x^*)$)}
           \State \Return $\nym,\pi^*$
    \end{algorithmic}
\end{itemize}
\caption{Modification of $\ParGen$ and $\NymGen_{user}$ for Game 1.}
    \label{simproof}
\end{figure}

\begin{figure}[ht!]
\begin{itemize}
    \item \underline{$\Oracle^{Nym}(\upk,\sppk)\colon$}
    \begin{algorithmic}[1]\setcounter{ALG@line}{3}
        \State \fbox{$\nym,\pi^* \gets \NymGen_{user}^*(\usk,\upk,\sppk,\mpk,\tau)$}
    \end{algorithmic}
    \item \underline{$\Oracle^{LoR}(\{\upk_0,\upk_1\},\sppk)\colon$}
    \begin{algorithmic}[1]\setcounter{ALG@line}{4}
        \State \fbox{$\nym_b,\pi^* \gets \NymGen_{user}^*(\usk_b,\upk_b,\sppk,\mpk,\tau)$}
    \end{algorithmic}
\end{itemize}
\caption{Modification of line 4 of $\Oracle^{Nym}$ and $\Oracle^{LoR}$ described in \Cref{o:nym,o:lor}.}
    \label{Game1LorNym}
\end{figure}

\textbf{Game 2}: As in Game 1, but we replace $upk$ by 0 in the encryption process inside $\NymGen_{user}$ (see \Cref{replenc}). Updates on $\Oracle^{LoR}$ and $\Oracle^{Nym}$ follows and all other values are generated as in the prior hop. Notice that the adversary does not have any access to an open-oracle since this operation is only done by the CA. Thus, It is straightforward to see that, by reduction, if an adversary $\adv$ able to distinguish this modification in our scheme can be used by an adversary $\mathcal{B}$ to break the IND-CPA security of $\Enc$.  Thus, we have : $\big|\mathbb{P}(G_1) - \mathbb{P}(G_2)\big| \leq Adv^{IND-CPA}_{\Enc,\adv}(\lambda)$.\par

%By the IND-CPA property of the encryption, an attacker cannot get any information about the message from the cipher.
\begin{figure}[ht!]
\begin{itemize}
    \item  \underline{$\NymGen^*_{user}(\usk,\upk,\sppk,\mpk)$}:
    \begin{algorithmic}[1]
           \State $\nym \gets \NIKE.\ShareKey(\sppk,\usk')$
           \State $c^* \gets \Enc.\Encrypt(\mpk,\fbox{0}; r)$
           \State $\pi^* \gets (c^*,\mathrm{S}_2(\pp_\NIZK,\tau,x^*))$
           \State \Return $\nym,\pi^*$
    \end{algorithmic}
\end{itemize}
\caption{Modification of $\NymGen_{user}$ for Game 2. }
    \label{replenc}
\end{figure}

\textbf{Game 3}: As Game 2, but in $\NymGen_{user}$, we generate $\nym$ by using a non-zero random element $\alpha \in \mathcal{K}$ instead of $\usk$ where $\mathcal{K}$ is the set in which $\usk$ is defined. For consistency, we set up a list $L$ that map $\upk$ to $\alpha$ and is updated when a user public key is used for the first time. Those changes are detailed in \Cref{Game3Nym}. We can proceed by reduction to show that if an adversary $\adv$ is able to distinguish this replacement, $\mathcal{A}$ can be used by an adversary $\mathcal{B}$ to break the indistinguishability of $\NIKE$. $\big|\mathbb{P}(G_2) - \mathbb{P}(G_3)\big| \leq Adv^{IND}_{\NIKE,\adv}(\lambda)$ follows. 

\begin{figure}[ht!]
\begin{itemize}
    \item  \underline{$\NymGen^*_{user}(\usk,\upk,\sppk,\mpk)$}:
    \begin{algorithmic}
            \If{\fbox{$\upk \notin L$}}
            \State \fbox{$\alpha \getsr \mathcal{K}^*$}
            \State \fbox{$L[\upk] \gets \alpha$}
            \Else
            \State \fbox{$\alpha \gets L[\upk]$}
            \EndIf
           \State $\nym^* \gets \NIKE.\ShareKey(\sppk,$\fbox{$\alpha$})
           \State $c^* \gets \Enc.\Encrypt(0,\mpk;r)$
           \State $\pi^* \gets (c^*,\mathrm{S}_2(\pp_\NIZK,\tau,x^*)$)
           \State \Return $\nym^*,\pi^*$
        \end{algorithmic}
    \end{itemize}
\caption{Modification of $\NymGen_{user}$ for Game 3.}
    \label{Game3Nym}
\end{figure}
The outputs of $\Oracle^{LoR}$ are now independent of the challenge bit $b$. This proves that our scheme achieves anonymity.
\end{proof}

\section{Sequence Diagrams}\label{sec:sequence-diagrams}
\olive{In \cref{fig:bpks-sequence-diagrams}, we provide an overview of the logical flows of the current and envisioned phases in the Austrian bPk system.}

\olive{Note that in both systems, the user key is generated when the user first registers in Austria; however, in the existing system, this key is never returned to the user.
In a new \emph{key request} step, citizens may obtain their keys after proper authentication using, e.g., the national eID system.
In the standard authentication flow, the central authority is no longer involved in our new design; yet, while not depicted here, the legacy flow is functionally still supported, e.g., for less tech-savvy users which prefer a ``bPk-as-a-service'' setting.
In both versions, service providers can further request the central authority to compute pseudonyms for specific users after providing proof of eligibility to obtain these pseudonyms.
While this is a common flow in the current system (in particular by public agencies), this can now also be done locally by dedicated (i.e., semi-trusted) service providers, which can compute pseudonyms for users \emph{for their own scope}, thereby further reducing the dependency on the central authority.}

\begin{figure*}%[h]
\begin{minipage}[t]{0.45\textwidth}
    \scalebox{0.80}{
    \begin{sequencediagram}
        \renewcommand\unitfactor{0.5}
        \newthread{u}{User}
        \newinst[1]{ca}{CA}
        \newinst[1]{sp}{SP}

    \begin{sdblock}{User creation}{}
        
    \stepcounter{seqlevel}
    \begin{call}{u}{\shortstack{Initial registration\\in Austria}}{ca}{}
            
            \begin{callself}{ca}{$\KeyGen()$}{$\sk$ stored}
            \end{callself}
        \end{call}    
    \end{sdblock}

          \stepcounter{seqlevel}
          \stepcounter{seqlevel}
          \stepcounter{seqlevel}
          \stepcounter{seqlevel}
          \stepcounter{seqlevel}
          \stepcounter{seqlevel}
          \stepcounter{seqlevel}

    \begin{sdblock}{Authentication}{}
    \begin{call}{u}{Request}{sp}{Access allowed if Accept}
        \begin{call}{sp}{Auth. Request}{ca}{\bpk}
            
            \mess{u}{Auth.}{ca}
            \begin{callself}{ca}{\shortstack{Derive \bpk\\ from user $\sk$}}{}
            \end{callself}
            
        \end{call}
       
        %\mess{sp}{Auth.}{sp}
        \begin{callself}{sp}{$\Verify(\mathrm{\bpk})$}{Accept or reject}
        \end{callself}
    \end{call}
    \end{sdblock}

    \begin{sdblock}{Request pseudonym}{}
          \stepcounter{seqlevel}
            \begin{call}{sp}{\shortstack{Request pseudonym for user}}{ca}{$\nym$}
            \mess{sp}{Proof of eligibility}{ca}
            \stepcounter{seqlevel}
            \begin{callself}{ca}{\shortstack{Derive \bpk\\ from user $\sk$}}{}
            \end{callself}
        \end{call}    
    \end{sdblock}
    
           \stepcounter{seqlevel}
          \stepcounter{seqlevel}
          \stepcounter{seqlevel}
          \stepcounter{seqlevel}
             \stepcounter{seqlevel}
          \stepcounter{seqlevel}

   \end{sequencediagram}
    }
    \end{minipage}
    \begin{minipage}[t]{0.45\textwidth}
    \scalebox{0.80}{       
    \begin{sequencediagram}
        \renewcommand\unitfactor{0.5}
        \newthread{u}{User}
        \newinst[1]{ca}{CA}
        \newinst[1]{sp}{SP}    
        \begin{sdblock}{User creation}{}
    \stepcounter{seqlevel}
    \begin{call}{u}{\shortstack{Initial registration\\in Austria}}{ca}{}
                %\mess{ca}{Auth.}{ca}
                \begin{callself}{ca}{KeyGen()}{$(\sk,\pk)$ stored}
                \end{callself}
            \end{call}    
        \end{sdblock}

        \begin{sdblock}{Key request}{}
          \stepcounter{seqlevel}
            \begin{call}{u}{Request key}{ca}{$(\sk,\pk)$}
                \mess{u}{Auth.}{ca}
            \end{call}
        \end{sdblock}
        
        \begin{sdblock}{Authentication}{}
            \begin{call}{u}{Request}{sp}{Access allowed if Accept}
            \mess{u}{$\nym\gets\NymGen()$}{u}
            \mess{u}{$nym$}{sp}
            \begin{callself}{sp}{$\Verify(\nym)$}{}\end{callself}
            \end{call}
        \end{sdblock}

          \stepcounter{seqlevel}
          \stepcounter{seqlevel}

        \begin{sdblock}{Request pseudonym}{}
          \stepcounter{seqlevel}
            \begin{call}{sp}{\shortstack{Request pseudonym for user}}{ca}{$\nym$}
            \mess{sp}{Proof of eligibility}{ca}
            \stepcounter{seqlevel}
            \begin{callself}{ca}{\shortstack{Derive \bpk\\ from user $\sk$}}{}
            \end{callself}
        \end{call}    
    \end{sdblock}

    \begin{sdblock}{Compute pseudonym}{}
          \stepcounter{seqlevel}
            \begin{callself}{sp}{\shortstack{Derive \bpk\\ from user $\sk$}}{sp}
            \end{callself}
    \end{sdblock}
    
\end{sequencediagram}
    }
    \end{minipage} 
    \caption{\olive{
High-level sequence diagrams for the existing (left) and proposed (right) bPk system.}}
    \label{fig:bpks-sequence-diagrams}
\end{figure*}

\section{Rotation key and Revocation}\label{sec:key-rot}

\olive{
  The signature in the user's secret key, i.e., $\sigma \gets \Sig.\Sign(\upk,\msk_{\Sig})$, could be extended to $\sigma \gets \Sig.\Sign((\upk,\mathtt{uid},\mathtt{epoch}),\msk_{\Sig})$, where $\mathtt{uid}$ is a stable user identifier attached to a user's public key, and $\mathtt{epoch}$ is a counter increasing on a regular basis.
  Upon authentication, users would now prove knowledge of a valid signature \emph{for the current $\mathtt{epoch}$} without disclosing $\mathtt{uid}$.
  In case that linking to previous periods is necessary, the user could additionally send the pseudonym from that previous period and show that they are both belonging to the same $\texttt{uid}$ while keeping it private.
  Adapting the zero-knowledge proofs in our construction is straightforward at low costs. The computational costs will roughly double if a user needs to prove consistency with a previous pseudonym which only needs to done once per service.} 

  \olive{When a user registers to a new service and needs to prove that it is the first registration, we would recommend to leverage the central authority to generate all previous pseudonyms of that user so that the service provider can check them against its existing database.
  While coming at the cost of a somewhat more active central authority, this approach relieves the user from having to store all their previous secret keys.}

  \olive{By ensuring that user secret keys are only stored within secure hardware elements and appropriate authentication mechanisms are employed, e.g., by requiring biometric authentication to request access -- the latter being a standard requirement in many eID solutions -- this approach enables key rotation at relatively low costs.
  However, if ``on the spot'' revocation of a user key is required, more sophisticated solutions need to be applied, leveraging from the broad literature on (issuer-side) revocation of attribute-based credential systems using very different approaches including, e.g., accumulators~\cite{PKC:CamKohSor09,C:CamLys02,EUROSP:BCDLRSY17} or allow/deny lists~\cite{PKC:NFHF09}.
  There, the issuer typically maintains some public revocation information on all active (alternatively: revoked) credentials, and users upon authentication prove that their credential is currently valid (or not revoked) relative to this revocation information, without revealing any further information about their certificate.
  However, revocation of anonymous credentials often introduces substantial computational and procedural overheads, such that a clear risk assessment under the current legislation would be required.}

\end{document}